\RequirePackage{amsmath} 
\documentclass{llncs}
\usepackage{algorithm,amsfonts,amssymb,bm,verbatim}
\usepackage{tabularx,arydshln}
\usepackage{cellspace,chngcntr,colortbl,epsfig,graphicx,mathtools,xr-hyper,hyperref,multirow,tikz,xcolor}
\usepackage[misc]{ifsym}
\usepackage[all]{xy}
\usepackage[noend]{algpseudocode}
\usepackage[normalem]{ulem} 

\renewcommand{\v}[1]{{\bm #1}}
\renewcommand{\qed}{\hfill\square}
\spnewtheorem{assumption}[theorem]{Assumption}{\bfseries}{\itshape}
\spnewtheorem{prop}[theorem]{Proposition}{\bfseries}{\itshape}
\spnewtheorem{rem}[theorem]{Remark}{\bfseries}{\itshape}
\newcommand{\keywords}[1]{\par\addvspace\baselineskip\noindent\keywordname\enspace\ignorespaces#1}

\DeclarePairedDelimiter\bra{\langle}{\rvert}
\DeclarePairedDelimiter\ket{\lvert}{\rangle}
\DeclarePairedDelimiterX\braket[2]{\langle}{\rangle}{#1 \delimsize\vert #2}

\algrenewcommand\algorithmicrequire{\textbf{Input:}}
\algrenewcommand\algorithmicensure{\textbf{Output:}}

\graphicspath{{../figures/}{./figures/}{../../figures/}}
\makeatletter
\def\input@path{{./texts/}}
\makeatother 

\begin{document}

\title{Reachability Constraints in Variational Quantum Circuits: \\ Optimization within Polynomial Group Module}

\author{Yun-\!Tak\! Oh$^{(\text{\Letter})}$\!\and\!\!\! Dongsoo\! Lee \and\!Jungyoul\! Park  \and\!Kyung\! Chul\! Jeong \!\!\and\! Panjin\;Kim}
\institute{The Affiliated Institute of ETRI, Daejeon 34044, Korea \newline ytak0105@nsr.re.kr}

\maketitle
\begin{abstract}
This work identifies a necessary condition for any variational quantum approach to reach the exact ground state. Briefly, the norms of the projections of the input and the ground state onto each group module must match, implying that module weights of the solution state have to be known in advance in order to reach the exact ground state.
An exemplary case is provided by matchgate circuits applied to problems whose solutions are classical bit strings, since all computational basis states share the same module-wise weights. Combined with the known classical simulability of quantum circuits for which observables lie in a small linear subspace, this implies that certain problems admit a classical surrogate for exact solution with each step taking $O(n^5)$ time.
The Maximum Cut problem serves as an illustrative example.
\keywords{Variational Quantum Circuit, Classical Optimization, NISQ}
\end{abstract}

\section{Introduction}\label{sec:intro}
Studies of the barren plateau phenomenon in quantum machine learning have greatly improved our understanding of the nature of variational quantum computing~\cite{BP18,BP23,BP24a,BP24b,BP25,BP25review}.
Specifically, recent developments in the dynamical Lie algebraic theory of barren plateaus not only address questions such as why and when barren plateaus occur, but also provide guidance on how to circumvent them, paving the way to potentially practical applications~\cite{gsim,QRENN}.

Variational quantum algorithms are generally formulated as quantum-classical hybrid optimization procedures~\cite{QAOA,VQE}. It typically involves iteratively optimizing the parameters of a quantum circuit, which implements a unitary transformation on a $2^n$-dimensional Hilbert space, where $n$ is the number of qubits. However, the seeming advantage of searching an exponentially large space is not always beneficial, as it may be impeded by phenomena such as barren plateaus or poor local minima~\cite{BP18,AK22}. From this perspective, a promising direction for addressing some of these challenges comes from Lie algebraic theory. Under the induced action, the operator space decomposes as a direct sum of smaller invariant subspaces~\cite{BP23}. Indeed, it is known how a unitary operation on a $2^n$-dimensional Hilbert space is represented on these subspaces, and moreover, there is a strong connection between classical simulability and the absence of barren plateaus~\cite{BP25,gsim}.

Building on this line of research, this work investigates how the action of a Lie group relates the input state, the observable, and the associated ground state. The primary focus is on variational algorithms (rather than machine learning) in which the Hamiltonian is fixed and the objective is to prepare its ground state. The results are summarized as follows:
\begin{itemize}
  \item[$\bullet$] A lower bound on the gap between the expectation values of the ground state and the smallest one attainable by any variational method is given. 
  The size of the gap depends on the weight distribution of an input state and the ground state over group modules.
  \item[$\bullet$] For matchgate circuits~\cite{matchgate,JM08}, \textit{all computational basis states} share the same weight distribution. This observation implies that, for certain Hamiltonians whose ground states are not explicitly known but are known to lie within the computational basis, the aforementioned gap can be reduced to zero.  
  \item[$\bullet$] Refining the existing results in Ref.~\cite{gsim}, an effective representation of the cost function is derived for Hamiltonians supported on polynomial subspaces.
  \item[$\bullet$]  The proposed method is numerically tested on the maximum cut (MaxCut) problem, for which the exact solution is encoded in a ground state that is a computational basis state and therefore satisfies the reachability constraints. Both the expectation value and the gradient can be computed within $O(n^5)$ time.
\end{itemize}
Quoting Goh et al., who introduced a Lie-algebraic classical simulation~\cite{gsim}, on page 7 of the reference, the authors summarized their work as
\begin{itemize}
\item[(1)] Wick-based simulation: general observable, special state;
\item[(2)] $\mathfrak{g}$-sim-based simulation: special observable, general state.
\end{itemize}
The result of this work fits naturally into this framework as the third case,
\begin{itemize}
\item[(3)] this work: special observable, special state,
\end{itemize}
although the meaning of `special state'\label{ss} here differs from that in (1).
In this work, a \textit{special state} refers to a class of states that satisfy a necessary condition for exact optimization in quantum variational methods (see Section~\ref{sec:main}).

\begin{figure*}[htbp]
    \centering
    \includegraphics[width=\textwidth]{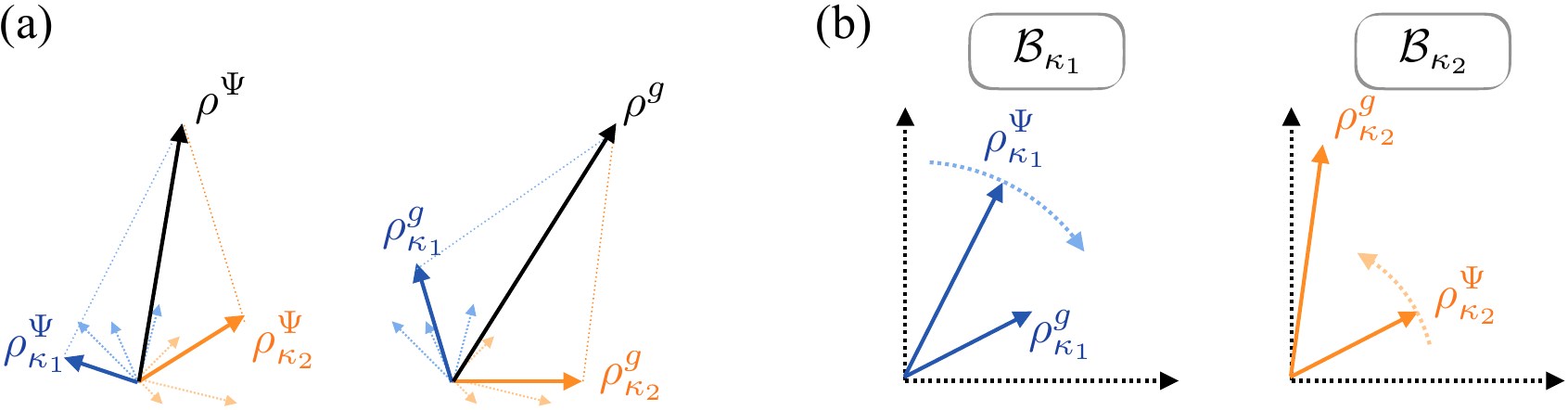}
    \caption{
    (a) Schematic projection of \(\rho^{\mathrm \Psi}\) and \(\rho^g\) onto two representative modules \(\mathcal{B}_\kappa\). 
    The projected components are shown as vectors in the corresponding subspaces (blue and orange). 
    (b) Within each module, unitary transformations act as rotations that preserve the magnitude of the projected component. 
    If $\| \rho^{\mathrm \Psi}_\kappa \|_\mathrm{HS} \neq \| \rho^g_\kappa \|_\mathrm{HS}$, the two vectors cannot be matched by such rotations.
    }
    \label{fig:schematic}
\end{figure*}

It is not claimed in this paper that the proposed algorithm solves NP-complete problems since how the number of iterations scales with the problem size is unclear.
The numerical results roughly suggests that the success probability decreases rather rapidly as $n$ grows.
One of the promising directions for future research is a rigorous and thorough convergence analysis of the algorithm.
\section{Background}\label{sec:background}
The algebraic structure of parameterized quantum circuits and variational approaches is briefly introduced in this section.
Readers are assumed to be familiar with basic notions in quantum information such as Pauli operators, logic circits, and so on~\cite{QCbook}.

\subsection{Parameterized quantum circuit and group module}

Let $\mathcal{H}$ denote the $2^{n}$-dimensional Hilbert space of an $n$-qubit system, and let
\begin{align*}
\mathcal{B}(\mathcal{H})=\{A:\mathcal{H}\rightarrow\mathcal{H}\}
\end{align*}
be the vector space of all bounded linear operators on \(\mathcal{H}\).  We equip \(\mathcal{B}(\mathcal{H})\) with the Hilbert--Schmidt inner product and norm,
\begin{align*}
    \langle A,B\rangle_{\mathrm{HS}}
=
    \operatorname{Tr} [A^{\dagger}B],
\qquad
    \| A \|_{\rm HS}
= 
    \langle A,A\rangle_{\mathrm{HS}},
\qquad
    A,B\in\mathcal{B}(\mathcal{H}) ,
\end{align*}
which induces an orthonormal operator basis $\{b_{i}\}_{i=1}^{2^{2n}}\subset\mathcal{B}(\mathcal{H})$ satisfying
\begin{align*}
\langle b_i,b_j\rangle_{\mathrm{HS}}=\delta_{i,j}.
\end{align*}
The subscript ${\rm HS}$ will be omitted hereafter.

Let $U(\boldsymbol{\theta})$ be a parameterized quantum circuit with parameters $\boldsymbol{\theta}=(\theta_1,\theta_2,\ldots)$. The unitary group  $\mathcal{U}$ generated by the circuit family are characterized by the so-called dynamical Lie algebra $\mathfrak{g}$~\cite{intro-quantum-dynamics,zeier11}. The group $\mathcal{U}$ acts on $\mathcal{B}(\mathcal{H})$ by conjugation, $A\mapsto UAU^{\dagger}$, $U\in\mathcal{U}$. Since this action is unitary with respect to $\langle\cdot,\cdot\rangle$ and $\mathcal{U}$ is compact, $\mathcal{B}(\mathcal{H})$ decomposes into a direct sum of $\mathcal{U}$-invariant subspaces (group modules),
\begin{align}
\mathcal{B}(\mathcal{H})=\bigoplus_{\kappa}\mathcal{B}_{\kappa}.
\label{eq:operator_space_decomp}
\end{align}
Each module $\mathcal{B}_{\kappa}$ satisfies
\begin{align*}
\forall\,U\in\mathcal{U},\;\forall\,A\in\mathcal{B}_{\kappa}:\qquad
U A U^{\dagger}\in\mathcal{B}_{\kappa}.
\end{align*}
Consequently, any operator $O\in\mathcal{B}(\mathcal{H})$ admits the unique decomposition
\begin{align*}
O = \sum_{\kappa} O_{\kappa},
\qquad
O_{\kappa}\in\mathcal{B}_{\kappa},
\end{align*}
where $O_{\kappa}$ is the orthogonal projection onto $\mathcal{B}_{\kappa}$.  
If $\{b^{l}_{\kappa}\}_{l=1}^{d_{\kappa}} $ is an orthonormal basis of $\mathcal{B}_{\kappa}$ and $d_{\kappa}=\dim (\mathcal{B}_{\kappa})$, then
\begin{align*}
O_{\kappa}= \sum_{l=1}^{d_{\kappa}}
\operatorname{Tr}\!\big[(b^{l}_{\kappa})^{\dagger} O \big]\; b^{l}_{\kappa}.
\end{align*}

\subsection{Variational optimization and barren plateau}

Variational quantum algorithms optimize a parameterized circuit $U(\boldsymbol{\theta})$ by minimizing a cost function defined from measurement outcomes.  A common setting prepares a state
\begin{align*}
    \rho^\Psi(\boldsymbol{\theta})
=
    U(\boldsymbol{\theta})\,\rho^\mathrm{i}\,U(\boldsymbol{\theta})^{\dagger} ,
\end{align*}
where $\rho^\mathrm{i}$ is the initial state, and then minimizes
\begin{align}
    C(\boldsymbol{\theta})
    :=
    \operatorname{Tr}\bigl[\rho^\Psi(\boldsymbol{\theta})\, H \bigr] , 
\label{eq:cost-fn}
\end{align}
where \(H \in \mathcal{B}(\mathcal{H})\) is an observable, commonly referred to as a Hamiltonian. In many cases, the problem of finding its ground state can be reduced to solving a corresponding classical optimization problem. In applications such as the variational quantum eigensolver or the quantum approximate optimization algorithm, $H$ encodes an objective Hamiltonian such that minimizing the energy yields an (approximate) solution~\cite{QAOA,VQE}. In quantum machine learning, $H$ defines a task-specific loss to be minimized~\cite{QNN18a,QNN18b,QNN19}.

The group module decomposition in Eq.~\eqref{eq:operator_space_decomp} provides a rigorous framework for analyzing the optimization landscape. We decompose the initial state $\rho^{\textrm i}$ and the observable $H$ into their module components,
\begin{align*}
\rho^{\textrm{i}}= \sum_{\kappa}\rho^{\textrm{i}}_{\kappa}, \qquad
H= \sum_{\kappa} H_{\kappa}.
\end{align*}%
Projection onto each module is described in the next section. Because the circuit unitary $U(\boldsymbol{\theta})$ does not mix different modules, the evolved state within a given module is obtained by applying the unitary to that component,
\begin{align*}
\rho_{\kappa}^\Psi(\boldsymbol{\theta}) = U(\boldsymbol{\theta})\,\rho^{\textrm{i}}_{\kappa}\,U^{\dagger}(\boldsymbol{\theta}) .
\end{align*}
Consequently, the cost function can be expressed as a sum over module contributions,
\begin{align}
C(\boldsymbol{\theta}) = \sum_{\kappa} C_{\kappa}(\boldsymbol{\theta})
= \sum_{\kappa} \operatorname{Tr}\!\bigl[\rho_{\kappa}^\Psi(\boldsymbol{\theta})\,H_\kappa  \bigr] 
= \sum_{\kappa} \operatorname{Tr}\!\bigl[
U(\boldsymbol{\theta})\,\rho^{\textrm{i}}_{\kappa}\,U^{\dagger}(\boldsymbol{\theta})
\,H_\kappa
\bigr] 
.
\label{eq:group-mod-loss}
\end{align}
When a Hamiltonian \(H\) has support only on a few modules, the effective dimension that governs $C(\boldsymbol{\theta})$ can be much smaller than $\dim \bigl(\mathcal{B}(\mathcal{H})\bigr)$. This formulation will be employed in Section~\ref{sec:main} to derive effective expressions for the cost function and to discuss the implications for gradient concentration.

Equation~\eqref{eq:group-mod-loss} implies each group module behaves independently, allowing barren plateau analysis to be carried out on a module‑by‑module basis. In Ref.~\cite{BP23}, it is shown that each module contributes independently to the variance of the loss gradient,
\begin{align*}
\operatorname{Var}_{\boldsymbol{\theta}}\!\bigl[\partial_{\theta} C(\boldsymbol{\theta})\bigr]
\approx
\sum_{\kappa} \frac{\|\rho^{\textrm{i}}_{\kappa}\|\,\|H_{\kappa}\|}{\dim (\mathcal{B}_{\kappa})} .
\end{align*}
If the observable is supported only on a few modules whose dimensions scale polynomially with the system size, the barren plateau phenomenon is expected to be absent during optimization.

\section{Reachability constraints}\label{sec:main}

Consider a family of circuits whose conjugation action preserves the group module decomposition \(\mathcal{B}(\mathcal{H}) = \bigoplus_{\kappa} \mathcal{B}_{\kappa}\). Let \(C(\v \theta)\), defined in Eq.~\eqref{eq:cost-fn}, denote the cost function to be minimized and let $\rho^{\mathrm i}$ be the initial state. For simplicity, we assume that the Hamiltonian \(H\) admits a unique ground state. We denote its smallest eigenvalue by \(E_g\) and the corresponding ground state by \(\rho^g\). For any density matrix \(\rho\), let \(\rho_\kappa\) denote its orthogonal projection onto \(\mathcal{B}_\kappa\).

\begin{theorem}[Reachability constraints]
Consider a variational algorithm in the above setting.
Depending on the weight distribution of the initial state and the ground state over the modules, the following statements hold:

\begin{itemize}
\item[$\bullet$] If \(\bigl\| \rho^{\mathrm{i}}_{\kappa} \bigr\| = \bigl\| \rho^{g}_{\kappa} \bigr\|\) for all \(\kappa\), then there can exist \(U(\v \theta)\) such that \(C(\v \theta) = E_g\).

\vspace{3mm}

\item[$\bullet$] If \(\bigl\| \rho^\mathrm{i}_{\kappa} \bigr\| \ne \bigl\| \rho^{g}_{\kappa} \bigr\|\) for some \(\kappa\), then for all \(U(\v \theta)\),
\begin{align*}
C(\v \theta) \ge E_g \pi_{g} + E_1 (1- \pi_{g} ) ,
\end{align*}
where \(E_1\) is the first excited energy and
\(
\pi_{g} = \sum_\kappa \| \rho^g_\kappa \| \cdot \| \rho^\mathrm{i}_\kappa \|.
\)
\end{itemize}
\label{thm:reacheability}
\end{theorem}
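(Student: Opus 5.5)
The plan is to compare the evolved state $\rho^\Psi(\v\theta)=U\rho^{\mathrm i}U^\dagger$ with the ground state $\rho^g$ module by module, using the fact that conjugation by $U\in\mathcal U$ is an isometry of each $\mathcal B_\kappa$. The key quantity is the fidelity-like overlap $\operatorname{Tr}[\rho^\Psi\rho^g]$. Because the modules are mutually orthogonal, it splits as
\begin{align*}
\operatorname{Tr}[\rho^\Psi\rho^g]=\sum_\kappa\langle \rho^g_\kappa, U\rho^{\mathrm i}_\kappa U^\dagger\rangle .
\end{align*}
Here we use that $\rho^g$ is Hermitian, so the projections can be taken Hermitian. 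Cauchy--Schwarz within each module, together with $\|U\rho^{\mathrm i}_\kappa U^\dagger\|=\|\rho^{\mathrm i}_\kappa\|$, gives
\begin{align*}
\operatorname{Tr}[\rho^\Psi\rho^g]\le \sum_\kappa \|\rho^g_\kappa\|\,\|\rho^{\mathrm i}_\kappa\|=\pi_g .
\end{align*}
I would treat $\|\cdot\|$ as the genuine HS norm $\sqrt{\operatorname{Tr}[A^\dagger A]}$, so that Cauchy--Schwarz applies. The printed definition omits the square root, and I will note this.

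For the second bullet, write $\rho^g=|g\rangle\langle g|$ and $p=\langle g|\rho^\Psi|g\rangle=\operatorname{Tr}[\rho^\Psi\rho^g]\le\pi_g$. Since every other eigenvalue of $H$ is at least $E_1$, the spectral decomposition gives
\begin{align*}
C(\v\theta)=\operatorname{Tr}[\rho^\Psi H]\ge E_g p+E_1(1-p).
\end{align*}
This is decreasing in $p$ because $E_1> E_g$, by uniqueness of the ground state. Substituting $p\le\pi_g$ yields the claimed bound.

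We should also check that the bound is nontrivial, i.e.\ $\pi_g<1$ when some norm differs. Assuming the initial state is pure (the usual setting, and the one implicit here), we have $\sum_\kappa\|\rho^{\mathrm i}_\kappa\|^2=\|\rho^{\mathrm i}\|^2=1$ and likewise $\sum_\kappa\|\rho^g_\kappa\|^2=1$. Then
\begin{align*}
\pi_g=1-\tfrac12\sum_\kappa\bigl(\|\rho^g_\kappa\|-\|\rho^{\mathrm i}_\kappa\|\bigr)^2<1
\end{align*}
whenever some $\|\rho^g_\kappa\|\neq\|\rho^{\mathrm i}_\kappa\|$. If $\rho^{\mathrm i}$ were mixed, the inequality $C\ge E_g\pi_g+E_1(1-\pi_g)$ still holds, and $\pi_g<1$ follows from $\|\rho^{\mathrm i}\|<1$.

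For the first bullet, the statement is existential in a weak sense ("there can exist"). The plan is to show that the obstruction above disappears: when all module norms match, $\pi_g=1$, so the upper bound permits $\operatorname{Tr}[\rho^\Psi\rho^g]=1$, i.e.\ $\rho^\Psi=\rho^g$ and $C=E_g$. Whether this is actually attained depends on whether $\mathcal U$ acts transitively enough on each sphere in $\mathcal B_\kappa$, i.e.\ can rotate each $\rho^{\mathrm i}_\kappa$ onto $\rho^g_\kappa$ simultaneously. I would phrase the argument as showing that equal norms is exactly the case where no norm-based obstruction exists, and that equality in Cauchy--Schwarz requires $U\rho^{\mathrm i}_\kappa U^\dagger=\rho^g_\kappa$ for every $\kappa$ (a necessary and consistent condition). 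Then $C=E_g$ follows whenever such a $U\in\mathcal U$ exists, for example when the initial and ground states lie in the same group orbit, as for matchgates and computational basis states.

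The main obstacle is this first bullet. It cannot be proved as a genuine existence statement without orbit or transitivity information, so the honest content is that the second bullet's obstruction vanishes and equality is consistent. I would make explicit that "can exist" means "is not excluded."
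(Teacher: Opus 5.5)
Your proposal is correct and follows the paper's own argument. Module-wise Hilbert--Schmidt isometry plus Cauchy--Schwarz gives $\operatorname{Tr}[\rho^\Psi\rho^g]\le\pi_g$, the spectral decomposition of $H$ gives the bound, and the first bullet is established only in the weak ``not excluded'' sense, which is all the paper proves too: it shows only the necessary condition.

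Your single monotonicity step in $p$, using $E_1>E_g$, is cleaner than the paper's two-step chain, which tacitly needs $E_g\le 0\le E_1$ (or a constant shift of $H$). Your use of the square-rooted norm and your check that $\pi_g<1$ fill small gaps in the paper. One qualifier: matchgates connect computational basis states only within a fixed parity sector, so your closing example needs that restriction.
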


\begin{proof}
By construction, conjugation by the circuit preserves each module:
for all \(A \in \mathcal{B}_\kappa\),
\(
U(\v \theta) \,A \, U(\v \theta)^{\dagger} \in \mathcal{B}_{\kappa}.
\)
Furthermore, conjugation is an isometry with respect to the Hilbert--Schmidt inner product,
\begin{align*}
\langle U(\v \theta) A U(\v \theta)^{\dagger}, U(\v \theta) B U(\v \theta)^{\dagger} \rangle
=
\langle A, B \rangle .
\end{align*}
Therefore, the norm of each projected component of the state is preserved,
\begin{align*}
\bigl\|
\bigl(U(\v \theta) \rho^\textrm{i} U(\v \theta)^{\dagger} \bigr)_{\kappa}
\bigr\|
=
\bigl\|
\rho^\textrm{i}_{\kappa}
\bigr\|.
\end{align*}

Assume there exists a circuit such that
\(
U(\v \theta) \rho^\textrm{i} U(\v \theta)^{\dagger} = \rho^{g}.
\)
Decomposing both states into module components,
\begin{align*}
\rho^\textrm{i} = \sum_{\kappa} \rho^\textrm{i}_{\kappa},
\qquad
\rho^g = \sum_{\kappa} \rho^{g}_{\kappa},
\end{align*}
the invariance above implies a necessary condition
\begin{align}
\bigl\| \rho^\textrm{i}_{\kappa} \bigr\|
=
\bigl\| \rho^g_{\kappa} \bigr\|
\qquad \forall \kappa .
\label{eq:module_weight_condition}
\end{align}

We now consider the case in which Eq.~\eqref{eq:module_weight_condition} is violated. In this situation, the achievable overlap with the ground state is fundamentally limited. In particular, by the Cauchy--Bunyakovsky--Schwarz inequality,
\begin{align*}
\operatorname{Tr}[\rho^g \rho^{\Psi}]
=
\sum_{\kappa} \operatorname{Tr}[\rho_\kappa^g \rho_\kappa^{\Psi}]
\le
\sum_{\kappa}
\| \rho_\kappa^g \| \cdot \| \rho_\kappa^{\Psi} \|
=
\sum_{\kappa}
\| \rho_\kappa^g \| \cdot \| \rho_\kappa^{\mathrm i} \|.
\end{align*}
Thus, no reachable state \(\rho^{\Psi}\) can attain full overlap with \(\rho^g\).

Writing the Hamiltonian as
\begin{align*}
H = E_g \rho^g + \sum_{\lambda \ne g} E_\lambda \rho^\lambda ,
\end{align*}
the expectation value reads
\begin{align}
\operatorname{Tr}[\rho^{\Psi}\,H ]
&=
E_g \operatorname{Tr}[\rho^{\Psi} \rho^g ]
+
\sum_{\lambda \ne g} E_\lambda \operatorname{Tr}[\rho^{\Psi} \rho^\lambda ] .
\label{eq:lower-bound-1}
\end{align}
Substituting the above inequailty into Eq.~\eqref{eq:lower-bound-1}, we obtain
\begin{align*}
\operatorname{Tr}[\rho^{\Psi}H]
\ge
E_g \pi_g
+
\sum_{\lambda \ne g} E_\lambda \operatorname{Tr}[\rho^{\Psi} \rho^\lambda]
\ge
E_g \pi_g + E_1 (1-\pi_g).
\end{align*}
Here, \(E_1\) denotes the energy of the first excited state above the ground state.
\(\qed\)
\end{proof}

Therefore, the equality of module-wise Hilbert--Schmidt norms constitutes a reachability constraint imposed by the circuit structure. If the module weights differ, no admissible unitary can transform the initial state into the ground state.

When the module weight distribution of the target state is known, one can construct an initial state with an identical distribution. Such a tailored initial state, referred to as a \textit{special state} (introduced in the third case in Section~\ref{sec:intro}), can enhance optimization performance in certain settings. We emphasize that the notion of a special state in this work differs from that in Ref.~\cite{BP25}, where the term is used to denote states that are easy to simulate. In contrast, here it refers to states that satisfy the reachability condition and therefore admit the possibility of being fully optimized.

\section{Classical optimization by effective representation}

Suppose \(H\) belongs to a single group module \(\mathcal{B}_{\kappa_p}\subset\mathcal{B}(\mathcal{H})\) such that
\begin{align*}
\dim(\mathcal{B}_{\kappa_p}) = \operatorname{poly}(n),
\qquad
H \in \mathcal{B}_{\kappa_p}.
\end{align*}
From Eq.\,\eqref{eq:group-mod-loss}, one can directly show that the cost function depends only on the \(\kappa_p\)-component,
\begin{align}
C(\boldsymbol{\theta})
=
\operatorname{Tr}[\rho^{\Psi} (\boldsymbol{\theta}) \, H]
=
\operatorname{Tr}[\rho^{\Psi}_{\kappa_p} (\boldsymbol{\theta})\,  H]
=
\operatorname{Tr}
\bigl[
H
\,
U(\boldsymbol{\theta})\,\rho^{\textrm{i}}_{\kappa_p}\,U^{\dagger}(\boldsymbol{\theta})
\bigr] .
\end{align}
Therefore it suffices to search the ground state within \(\mathcal{B}_{\kappa_p}\) while discrepancies in other modules do not affect the objective. 

We now formulate a classical optimization framework based on the effective representation introduced in~\cite{BP25}.
Restricting to \(\mathcal{B}_{\kappa_p}\), we expand the projected initial state and Hamiltonian in an orthonormal operator basis \(\{b^{\,l}_{\kappa_p}\}\),
\begin{align*}
\rho^\mathrm{i}_{\kappa_p}
=
\sum_{l_1=1}^{d_{\kappa_p}}
\varphi^{\,l_1}_{\kappa_p} b^{\,l_1}_{\kappa_p},
\qquad 
H =
\sum_{l_2=1}^{d_{\kappa_p}} \eta^{\,l_2}_{\kappa_p} b^{\,l_2}_{\kappa_p}.
\end{align*}
Because the circuit preserves the module structure, its conjugation action induces a linear transformation within \(\mathcal{B}_{\kappa_p}\),
\begin{align*}
U(\boldsymbol{\theta}) b^{\,l}_{\kappa_p} U^\dagger(\boldsymbol{\theta})
=
\sum_{l'}
\bigl[\mathcal{M}(\boldsymbol{\theta})\bigr]_{l,l'}\, b^{\,l'}_{\kappa_p}.
\end{align*}
The matrix elements \(\bigl[\mathcal{M}(\boldsymbol{\theta})\bigr]_{l,l'}\) are obtained via the Hilbert--Schmidt inner product,
\begin{align}
\bigl[\mathcal{M}(\boldsymbol{\theta})\bigr]_{l,l'}
=
\operatorname{Tr}
\!\left[
b^{\,l'}_{\kappa_p}
\left(
U(\boldsymbol{\theta}) b^{\,l}_{\kappa_p} U^\dagger(\boldsymbol{\theta})
\right)
\right].
\label{eq:eff-U}
\end{align}
Substituting these expansions into the cost function yields
\begin{align}
C(\boldsymbol{\theta})
=
\sum_{l_1,l_2}
\varphi^{\,l_1}_{\kappa_p}
\bigl[\mathcal{M}(\boldsymbol{\theta})\bigr]_{l_1,l_2}
\eta^{\,l_2}_{\kappa_p}.
\label{eq:eff-expectation}
\end{align}
All quantities involved scale polynomially in \(n\).

When the circuit generators, the target observable, and the ground state satisfy the aforementioned structural conditions, Eq.~\eqref{eq:eff-expectation} can be computed efficiently on a classical computer enabling a search for the exact ground state. Such an observable fall into the class of \textit{special observable} appeared in the introduction.

\section{Application}
\subsection{Matchgate circuit and MaxCut Hamiltonian}
As an example, we examine the variational quantum eigensolver for the MaxCut problem on 3-regular graphs via the matchgate circuit. The $n$-qubit matchgate circuit is generated by
\begin{align}
\mathcal{G}
=
\{Z_i\}_{i=1}^{n}
\cup
\{X_i X_{i+1}\}_{i=1}^{n-1},
\end{align}
and its dynamical Lie algebra consists of quadratic Majorana operators. Here, \(X_i,Y_i,Z_i\) are the single qubit Pauli operators acting on site \(i\). The operator space decomposes as
\begin{align*}
\mathcal{B}(\mathcal{H})
=
\bigoplus_{\kappa=0}^{2n} \mathcal{B}_{\kappa},
\qquad
\dim (\mathcal{B}_{\kappa}) = \binom{2n}{\kappa}.
\end{align*}
Introducing \(2n\) Majorana operators \( \{c_\mu^{X}, c_\mu^{Y}\}_{\mu=1}^{n}\), the \(\kappa\)-th group module is spanned by basis operators constructed from multiplying \(\kappa\) Majorana operators (see Appendix~\ref{appsec:majorana} for details). 

The MaxCut problem on a graph \(G=(V,E)\) can be encoded in the Hamiltonian
\begin{align*}
H_{\mathrm{MaxCut}} = \sum_{( i,j ) \in E} Z_i Z_j .
\end{align*}
Each term \(Z_i Z_j\) can be expressed as a product of four Majorana operators and therefore belongs to \(\mathcal{B}_{4}\)\footnote{Therefore any Ising type Hamiltonian belongs to $\mathcal{B}_4$}. Consequently,
\begin{align*}
H_{\mathrm{MaxCut}} \in \mathcal{B}_{4},
\qquad
\dim (\mathcal{B}_{4}) = \binom{2n}{4} = O(n^4).
\label{eq:maxcut-b4}
\end{align*}

It is worth noting that in this setup, one can satisfy the reachability constraint without knowing the ground state.

\begin{rem}
For computational basis states, the Hilbert--Schmidt norm of the projection onto $\mathcal{B}_\kappa$ are the same. In other words, for any computational states $\rho^a$ and $\rho^b$,
\begin{align}
\bigl\| \rho^{a}_{\kappa} \bigr\|
=
\bigl\| \rho^{b}_{\kappa} \bigr\|
\text{~~for all } \kappa .
\end{align}
\end{rem}
Since the ground state of the MaxCut problem is a computational state, the constraint is satisfied as long as the initial state is also a computational state.

\begin{prop}\label{prop:complexity}
The classical time and memory complexity of computing the expectation value of $H_{\mathrm{MaxCut}}$ is bounded by $O(n^5)$.
\end{prop}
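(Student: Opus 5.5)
We will bound the cost of evaluating Eq.~\eqref{eq:eff-expectation} with $\kappa_p=4$, where $d_4=\binom{2n}{4}=O(n^4)$. The key point is to avoid forming the full $d_4\times d_4$ matrix $\mathcal{M}(\boldsymbol{\theta})$, which would already cost $O(n^8)$ memory. Instead, we exploit the fact that matchgate conjugation acts on Majorana operators through the $2n$-dimensional vector representation.

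The first step is to use the standard fact (Appendix~\ref{appsec:majorana}) that for $U$ generated by $\mathcal{G}$,
\begin{align*}
U c_\mu U^\dagger = \sum_{\nu=1}^{2n} R_{\mu\nu}(\boldsymbol{\theta})\, c_\nu ,\qquad R(\boldsymbol{\theta})\in SO(2n).
\end{align*}
Each generator $Z_i$ or $X_iX_{i+1}$ is a quadratic Majorana term $\propto c_\alpha c_\beta$, so each gate acts as a Givens rotation in a $2$-plane. Composing the gates of a circuit with $L$ gates therefore costs $O(Ln)$ time and $O(n^2)$ memory. Since a product of four Majoranas maps to a product of four transformed Majoranas, $\mathcal{M}(\boldsymbol{\theta})$ restricted to $\mathcal{B}_4$ is the fourth antisymmetric power $\wedge^4 R$. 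Its entries are $4\times4$ minors of $R$:
\begin{align*}
[\mathcal{M}]_{I,J} = \det R_{I,J},\qquad |I|=|J|=4 .
\end{align*}

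The second step is to write the cost as a contraction of two antisymmetric rank-$4$ tensors with $R^{\otimes4}$:
\begin{align*}
C = \sum_{I,J}\varphi^I\,\det R_{I,J}\,\eta^J
  = \frac{1}{(4!)^2}\sum_{\mu_1..\mu_4,\nu_1..\nu_4}\tilde\varphi_{\mu_1\mu_2\mu_3\mu_4}\,R_{\mu_1\nu_1}\cdots R_{\mu_4\nu_4}\,\tilde\eta_{\nu_1\nu_2\nu_3\nu_4},
\end{align*}
where $\tilde\varphi$ and $\tilde\eta$ are the fully antisymmetrized coefficient tensors. Each has $(2n)^4=O(n^4)$ entries. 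We then contract one index at a time, $T^{(1)}=\tilde\varphi\times_1 R$, $T^{(2)}=T^{(1)}\times_2 R$, and so on. Each mode product multiplies an $O(n^4)$-entry tensor by a $2n\times2n$ matrix along one index, which costs $O(n^5)$ time and $O(n^4)$ memory. The final inner product with $\tilde\eta$ costs $O(n^4)$.

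For MaxCut specifically, $H$ has only $|E|=3n/2$ nonzero $\eta$-coefficients. When $\rho^{\mathrm i}$ is a computational basis state, $\varphi$ is supported on products $c^X_ic^Y_ic^X_jc^Y_j$, giving $O(n^2)$ entries. So the bound $O(n^5)$ is comfortably an upper bound. An even cheaper alternative is available: by Wick's theorem each term is a Pfaffian of a $4\times4$ block of the covariance matrix $R\Gamma R^T$, where $\Gamma$ is the initial covariance. Computing $R\Gamma R^T$ costs $O(n^3)$.

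I expect the main obstacle to be justifying the map $\mathcal{M}=\wedge^4 R$ carefully. This requires fixing sign and ordering conventions for the Majorana basis of $\mathcal{B}_4$, and checking that the tensor-contraction form with antisymmetrized tensors reproduces the minors exactly. The remaining step is pure bookkeeping: confirming that no intermediate object exceeds $O(n^4)$ storage. Together these give $O(n^5)$ time and memory, within the statement's bound. Gradients follow by applying the same contraction with one $R$ replaced by $\partial_{\theta}R$, summed over the four slots, at the same order of cost.
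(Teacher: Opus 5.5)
Your proof is correct, but it takes a different route from the paper. The paper never leaves the $\binom{2n}{4}$-dimensional module. It propagates the coefficient vector gate by gate, using only that generators and basis elements are Pauli strings. Each $\mathcal{M}(\theta_q)$ is then the identity on basis elements commuting with $\gamma_q$, and a set of disjoint $2\times 2$ rotations on the $2\binom{2n-2}{3}=O(n^3)$ elements anticommuting with it. With $O(n^2)$ gates this gives $O(n^5)$ time, plus $O(n^5)$ memory for the stored commutation tables.

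You instead collapse the circuit to its $2n\times 2n$ orthogonal image $R$ and use $\mathcal{M}|_{\mathcal{B}_4}=\wedge^4 R$, so the cost is $O(Ln)+O(n^5)$ rather than $O(Ln^3)$. This buys three things: the expensive part no longer depends on depth, memory drops to $O(n^4)$, and, as your sparsity and Wick remarks show, it becomes evident that $O(n^5)$ is far from tight. The paper's argument, in exchange, does not need the exterior-power structure. It therefore carries over to any Pauli-generated circuit whose relevant module is spanned by Pauli strings, and it is exactly the gate-by-gate pipeline that is differentiated in the experiments.

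Two small repairs are needed. First, take $\tilde\varphi_{\pi(I)}=\operatorname{sgn}(\pi)\,\varphi^I$, zero on repeated indices. Then the double sum over orderings yields $4!\,\det R_{I,J}$, so the prefactor is $1/4!$, not $1/(4!)^2$. Second, your sign-convention concern goes away once you note that a product of four distinct Majoranas is Hermitian. It therefore equals $s_I=\pm1$ times a Pauli string, so in the paper's normalized basis $\mathcal{M}=D(\wedge^4R)D$ with $D=\operatorname{diag}(s_I)$, and $D$ can be absorbed into $\varphi$ and $\eta$.

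Finally, your gradient remark is loose, though the proposition itself only concerns the expectation value. Substituting $\partial_\theta R$ one parameter at a time costs $O(n^5)$ per parameter, i.e.\ $O(n^7)$ for all $O(n^2)$ of them. To keep the full gradient at $O(n^5)$, compute $\partial C/\partial R$ once by the same contraction with one slot left open. Then back-propagate through the Givens chain in $O(Ln)$.
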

The proof is given in Appendix~\ref{appsec:complexity}.
Once the optimization converges to the ground state, the solution (maximum cut partition) can be extracted by computing $\langle Z_i \rangle$ for all $i$.

\subsection{Numerical results}
Using the method described in Appendix~\ref{sec:app-matchgate-group-module}, we compute Eq.~\eqref{eq:eff-expectation} classically with its time and memory costs being bounded by \(O(n^{5})\) per single-shot evaluation (i.e., for a fixed set of parameters).

We compare the proposed method (projected method hereafter) against a conventional classical quantum circuit simulator implemented using the Pennylane library. Parameters are initialized randomly. Since matchgate circuits preserve the parity, we consider the zero string state \(\ket{\psi^{0} } = \bigotimes_{i} \ket{ 0}_{i}, \) and the single-qubit flipped state \(X_{1}|\psi^{0}\rangle\) as the initial state. In optimization, the gradient of the expectation value is computed via back-propagation. Further experimental details are given in Appendix~\ref{appsec:experiment}.

\begin{figure*}[htbp]
    \centering
    \includegraphics[width=0.9\textwidth]{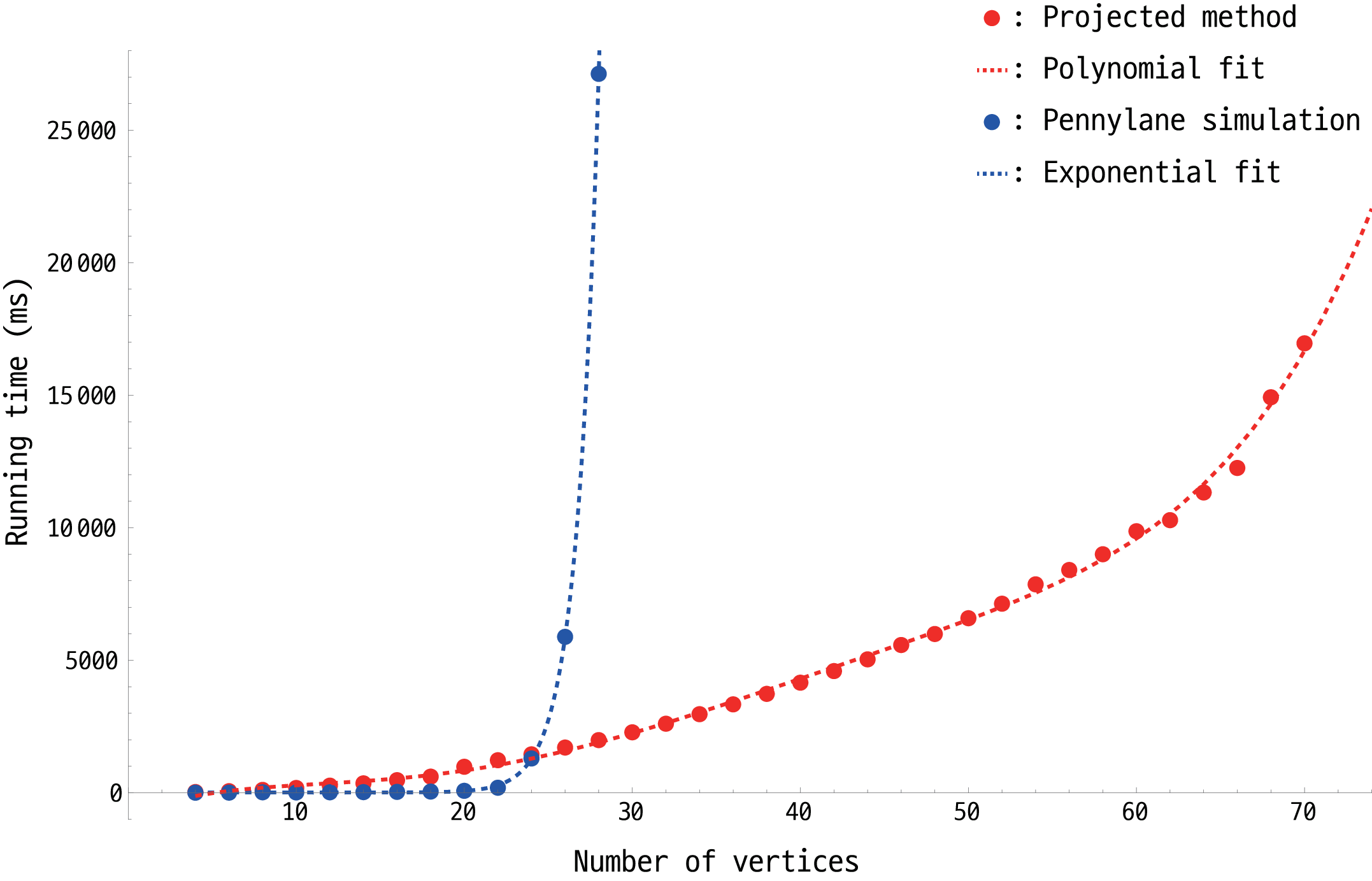}
    \caption{
    Running time as a function of $n$ (the number of vertices) for evaluating the expectation value of $H_{\mathrm{MaxCut}}$ using a matchgate circuit. 
    In Pennylane simulation, $n$ is the number of qubits.
    Each marker represents an average value of 100 trials.
    }
    \label{fig:time-scaling}
\end{figure*}

Figure~\ref{fig:time-scaling} shows the running time for a single-shot evaluation as a function of $n$. Each Red (blue) marker is the average running time over 100 trials for the projected method (Pennylane simulation).
The red (blue) dashed line is a polynomial (exponential) fit to the respective data points.
While the theoretical time cost is $O(n^5)$ for the projected method, the polynomial fitting curve we obtained (up to the fifth order in $n$) reads $0.000115186 n^5 - 0.0183293 n^4 + 1.0483 n^3 - 23.1588 n^2 + 255.757 n - 835.677$~(\textrm{ms}).
We suspect that the reason for the small coefficient of the quintic term originates from the parallelization inherent in GPU execution.
For $n < 64$, we were not able to observe the full utilization of a single GPU.\footnote{The limit on the largest graph size in the figure ($n =70$) is set by the memory limit of a single GPU. For the record, we could tackle $n=84$ by using four GPUs.}
Once the GPU utilization hits $100\%$ at $n=64$, from that on, it is shown in the figure that the behavior is different from $n<64$ which should read $O(n^5)$.

To test if the variational approach can reach the exact ground state when the constraint is satisfied,  for each system size \(n\) ($=|V|$), we generate 10 random instances of 3-regular graphs and attempt to solve them. For each instance, we perform up to 10 independent trials with random parameter initialization for each parity sector (even and odd). The success rates of finding the optimal solution up to \(n=60\) are summarized in Table~\ref{tab:success-rate}. While the optimal solution is consistently obtained for small system sizes, it decreases rather rapidly as \(n\) increases. A systematic analysis of how the required number of trials scales with system size remains open.

\begin{table}[htbp]
  \caption{Success rate of reaching the optimal state $\ket{\psi}$ such that $E_g = \bra{\psi} H_{\mathrm{MaxCut}} \ket{\psi}$.
  For each $n$, 10 instances are generated, and for each instance, up to 10 trials are attempted.
  It is counted as success if the optimal state is found within 10 trials for an instance.
  }
  \centering
  \begin{tabular}{c | >{\centering}p{0.5cm} >{\centering}p{0.5cm} >{\centering}p{0.5cm} >{\centering}p{0.7cm} >{\centering}p{0.7cm} >{\centering}p{0.7cm} >{\centering}p{0.7cm} >{\centering}p{0.7cm} >{\centering}p{0.7cm} >{\centering}p{0.7cm} >{\centering}p{0.7cm} >{\centering}p{0.7cm} >{\centering}p{0.7cm} >{\centering}p{0.7cm} >{\centering}p{0.7cm}}
     $n$&$4$&$8$&$12$&$16$&$20$ &$24$ &$28$ &$32$ &$36$ &$40$ &$44$ &$48$ &$52$ &$56$ &$60$ \tabularnewline \hline
    rate&$1$&$1$&$1$ &$1$ &$0.9$&$0.9$&$0.7$&$0.9$&$0.6$&$0.6$&$0.5$&$0.4$&$0.4$&$0.3$&$0.4$   \end{tabular}
\label{tab:success-rate}
\end{table}

Additional data including convergence curves for a random 3-regular graph and a benchmark result on $p=0.5$ Erd\H{o}s-R\'{e}nyi random graph~\cite{biqmac} is given in Appendix~\ref{appsec:convergence}.

\section{Conclusion}
A projected formulation of variational quantum optimization, grounded in the group module decomposition of operator space and Hilbert--Schmidt geometry, reveals a structural reachability constraint imposed by circuit symmetry.
For circuit families whose conjugation action preserves the decomposition 
\(
\mathcal{B}(\mathcal{H})=\bigoplus_{\kappa}\mathcal{B}_{\kappa},
\)
the invariance of each group module implies conservation of the Hilbert--Schmidt norm of the projected density operator within every module, yielding a necessary condition for reachability: a trial state can be mapped to a target state only if their density operators have identical module-wise Hilbert--Schmidt norms, which is independent of the problem Hamiltonian.


An intriguing structural observation is that all computational basis states share identical module weight distributions when matchgate circuit is considered.
Consequently, when the ground state of a problem Hamiltonian is a computational basis state, the reacheability condition is automatically satisfied for example by the reference state \( \ket{0 \ldots 0} \).  
Within the projected picture applied to the MaxCut problem, classical optimization of an effective matrix acting on
\(H_\mathrm{MaxCut} \in \mathcal{B}_{4}\)
sometimes yields the optimal circuit parameters in numerical experiments.

We emphasise that this does not render NP-hard problems solvable in polynomial time.  
Matchgate circuits are already known to admit efficient classical descriptions in the free fermionic framework.  
What is different here is the operator-algebraic explanation of reachability and variational reduction, which makes explicit the role of module weight invariance. 



\newpage

\bibliographystyle{splncs04}
\bibliography{reference}

@misc{biqmac,
  author       = {Angelika Wiegele},
  title        = {Biq Mac Library -- A collection of Max-Cut and quadratic 0-1 programming instances of medium size},
  year         = {2007},
  howpublished = {\url{https://biqmac.aau.at/biqmaclib.html}},
  note         = {Accessed: 2026-04-02}
}

@article{zeier11,
title="{Symmetry principles in quantum system theory}",
author={Robert Zeier  and Thomas Schulte-Herbr{\''u}ggen},
journal={Journal of mathematical physics},
volume={52},
number={11},
year={2011},
publisher={API Publishing}
}

@book{intro-quantum-dynamics,
title="{Introduction to quantum control and dynamics}",
author={Domenico d' Alessandro},
year={2021},
publisher={Chapman and hall/CRC}
}

@article{Larocca2023overpara,
    title ="{Theory of overparametrization in quantum neural networks}",
    author={Larocca, Martin and Ju, Nathan and García-Martín, Diego and Coles, Patric J and Cerezo, Marco},
    journal={Nature Computational Science},
    volume={3},
    number={6},
    pages={542--551},
    year={2023},
    publisher={Nature Publishing Group US New York}
}

@book{QCbook,
	title ="{Quantum computation and quantum information: 10th anniversary edition}",
	DOI={10.1017/CBO9780511976667},
	publisher={Cambridge University Press},
	author={Nielsen, Michael A. and Chuang, Isaac L.},
	year={2010}
}

@misc{QAOA,
      title="{A auantum approximate optimization algorithm}",
      author={Edward Farhi and Jeffrey Goldstone and Sam Gutmann},
      year={2014},
      eprint={1411.4028},
      archivePrefix={arXiv},
      primaryClass={quant-ph},
      url={https://arxiv.org/abs/1411.4028},
}

@article{VQE,
author={Peruzzo, Alberto and McClean, Jarrod and Shadbolt, Peter and Yung, Man-Hong and Zhou, Xiao-Qi and Love, Peter J. and Aspuru-Guzik, Al{\'a}n and O'Brien, Jeremy L.},
title="{A variational eigenvalue solver on a photonic quantum processor}",
journal={Nature Communications},
year={2014},
month={Jul},
day={23},
volume={5},
number={1},
pages={4213},
issn={2041-1723},
doi={10.1038/ncomms5213},
url={https://doi.org/10.1038/ncomms5213}
}

@article{BP18,
author={McClean, Jarrod R. and Boixo, Sergio and Smelyanskiy, Vadim N. and Babbush, Ryan and Neven, Hartmut},
title="{Barren plateaus in quantum neural network training landscapes}",
journal={Nature Communications},
year={2018},
month={Nov},
day={16},
volume={9},
number={1},
pages={4812},
issn={2041-1723},
doi={10.1038/s41467-018-07090-4},
url={https://doi.org/10.1038/s41467-018-07090-4}
}

@misc{BP23,
      title="{Showcasing a barren plateau theory beyond the dynamical Lie algebra}",
      author={N. L. Diaz and Diego García-Martín and Sujay Kazi and Martin Larocca and M. Cerezo},
      year={2023},
      eprint={2310.11505},
      archivePrefix={arXiv},
      primaryClass={quant-ph},
      url={https://arxiv.org/abs/2310.11505},
}

@article{BP24a,
author={Fontana, Enrico and Herman, Dylan and Chakrabarti, Shouvanik and Kumar, Niraj and Yalovetzky, Romina and Heredge, Jamie and Sureshbabu, Shree Hari and Pistoia, Marco},
title="{Characterizing barren plateaus in quantum ans{\"a}tze with the adjoint representation}",
journal={Nature Communications},
year={2024},
month={Aug},
day={22},
volume={15},
number={1},
pages={7171},
issn={2041-1723},
doi={10.1038/s41467-024-49910-w},
url={https://doi.org/10.1038/s41467-024-49910-w}
}

@article{BP24b,
author={Ragone, Michael and Bakalov, Bojko N. and Sauvage, Fr{\'e}d{\'e}ric and Kemper, Alexander F. and Ortiz Marrero, Carlos and Larocca, Mart{\'i}n and Cerezo, M.},
title="{A Lie algebraic theory of barren plateaus for deep parameterized quantum circuits}",
journal={Nature Communications},
year={2024},
month={Aug},
day={22},
volume={15},
number={1},
pages={7172},
issn={2041-1723},
doi={10.1038/s41467-024-49909-3},
url={https://doi.org/10.1038/s41467-024-49909-3}
}

@article{BP25,
author={Cerezo, M. and Larocca, Martin and Garc{\'i}a-Mart{\'i}n, Diego and Diaz, N. L. and Braccia, Paolo and Fontana, Enrico and Rudolph, Manuel S. and Bermejo, Pablo and Ijaz, Aroosa and Thanasilp, Supanut and Anschuetz, Eric R. and Holmes, Zo{\"e}},
title="{Does provable absence of barren plateaus imply classical simulability?}",
journal={Nature Communications},
year={2025},
month={Aug},
day={25},
volume={16},
number={1},
pages={7907},
issn={2041-1723},
doi={10.1038/s41467-025-63099-6},
url={https://doi.org/10.1038/s41467-025-63099-6}
}

@article{BP25review,
author={Larocca, Mart{\'i}n and Thanasilp, Supanut and Wang, Samson and Sharma, Kunal and Biamonte, Jacob and Coles, Patrick J. and Cincio, Lukasz and McClean, Jarrod R. and Holmes, Zo{\"e} and Cerezo, M.},
title="{Barren plateaus in variational quantum computing}",
journal={Nature Reviews Physics},
year={2025},
month={Apr},
day={01},
volume={7},
number={4},
pages={174-189},
issn={2522-5820},
doi={10.1038/s42254-025-00813-9},
url={https://doi.org/10.1038/s42254-025-00813-9}
}

@article{gsim,
  title = "{Lie-algebraic classical simulations for quantum computing}",
  author = {Goh, Matthew L. and Larocca, Martin and Cincio, Lukasz and Cerezo, M. and Sauvage, Fr\'ed\'eric},
  journal = {Phys. Rev. Res.},
  volume = {7},
  issue = {3},
  pages = {033266},
  numpages = {32},
  year = {2025},
  month = {Sep},
  publisher = {American Physical Society},
  doi = {10.1103/3y65-f5w6},
  url = {https://link.aps.org/doi/10.1103/3y65-f5w6}
}

@misc{QRENN,
      title="{Quantum recurrent embedding neural network}",
      author={Mingrui Jing and Erdong Huang and Xiao Shi and Shengyu Zhang and Xin Wang},
      year={2025},
      eprint={2506.13185},
      archivePrefix={arXiv},
      primaryClass={quant-ph},
      url={https://arxiv.org/abs/2506.13185},
}

@article{AK22,
author={Anschuetz, Eric R. and Kiani, Bobak T.},
title="{Quantum variational algorithms are swamped with traps}",
journal={Nature Communications},
year={2022},
month={Dec},
day={15},
volume={13},
number={1},
pages={7760},
issn={2041-1723},
doi={10.1038/s41467-022-35364-5},
url={https://doi.org/10.1038/s41467-022-35364-5}
}

@article{QNN19,
author={Havlicek, Vojtech and C{\'o}rcoles, Antonio D. and Temme, Kristan and Harrow, Aram W. and  andala, Abhinav and Chow, Jerry M. and Gambetta, Jay M.},
title="{Supervised learning with quantum-enhanced feature spaces}",
journal={Nature},
year={2019},
month={Mar},
day={01},
volume={567},
number={7747},
pages={209-212},
issn={1476-4687},
doi={10.1038/s41586-019-0980-2},
url={https://doi.org/10.1038/s41586-019-0980-2}
}

@article{QNN18a,
  title = "{Quantum circuit learning}",
  author = {Mitarai, K. and Negoro, M. and Kitagawa, M. and Fujii, K.},
  journal = {Phys. Rev. A},
  volume = {98},
  issue = {3},
  pages = {032309},
  numpages = {6},
  year = {2018},
  month = {Sep},
  publisher = {American Physical Society},
  doi = {10.1103/PhysRevA.98.032309},
  url = {https://link.aps.org/doi/10.1103/PhysRevA.98.032309}
}

@misc{QNN18b,
      title= "{Classification with Quantum Neural Networks on Near Term Processors}", 
      author={Edward Farhi and Hartmut Neven},
      year={2018},
      eprint={1802.06002},
      archivePrefix={arXiv},
      primaryClass={quant-ph},
      url={https://arxiv.org/abs/1802.06002}, 
}

@article{JM08,
    author = {Jozsa, Richard and Miyake, Akimasa},
    title = "{Matchgates and classical simulation of quantum circuits}",
    journal = {Proceedings of the Royal Society A: Mathematical, Physical and Engineering Sciences},
    volume = {464},
    number = {2100},
    pages = {3089-3106},
    year = {2008},
    month = {07},
    issn = {1364-5021},
    doi = {10.1098/rspa.2008.0189},
    url = {https://doi.org/10.1098/rspa.2008.0189},
    eprint = {https://royalsocietypublishing.org/rspa/article-pdf/464/2100/3089/742753/rspa.2008.0189.pdf},
}

@article{matchgate,
author = {Valiant, Leslie G.},
title = "{Quantum circuits that can be simulated classically in polynomial time}",
journal = {SIAM Journal on Computing},
volume = {31},
number = {4},
pages = {1229-1254},
year = {2002},
doi = {10.1137/S0097539700377025},
URL = {https://doi.org/10.1137/S0097539700377025},
eprint = {https://doi.org/10.1137/S0097539700377025}
}

@misc{kingma2017adam,
title = "{Adam: A Method for Stochastic Optimization}",
author = {Diederik P. Kingma and Jimmy Ba},
year = {2017},
eprint = {1412.6980},
archivePrefix = {arXiv},
primaryClass = {cs.LG},
url = {https://arxiv.org/abs/1412.6980}
}


\appendix

\section{Matchgate quantum circuit and group module}
\label{sec:app-matchgate-group-module}
\subsection{Matchgate circuit structure}

The variational ansatz depicted in Fig.~\ref{fig:mat-circuit} comprises \(O(n)\) circuit blocks. Each block contains three successive layers of quantum gates. 

We denote a gate index by
\[
q = (\mathfrak{b},\ell ,j),
\]
where \(\mathfrak{b}\) identifies the block, \(\ell\) the layer within that block, and \(j\) the qubit (or qubit pair) on which the gate acts.

In the first layer a single‑qubit \(Z\)–rotation is applied to every qubit \(j\in\{1,\dots ,n\}\):
\[
U(\theta)=\exp\!\bigl(-\tfrac{i}{2}\,\theta_q\,Z_j\bigr),
\qquad
\theta_q\in\mathbb{R}.
\]
The angles \(\{\theta_q\}_q\) are drawn from a random initialization and serve as the only variational parameters that are updated by the chosen optimization algorithm.  

The second and third layers contain two‑qubit gates acting on each adjacent pair of qubits \((i,i+1)\) with \(i=1,\dots ,n-1\). In these layer, the gate on the pair \((i,i+1)\) is
\begin{align}
U\bigl(\theta_q \bigr)
 =\exp\!\bigl(i\,\theta_q \,\gamma_{i}\bigr),
\end{align}
where the generator \(\gamma_{i}\) is independently sampled for each block from the set
\begin{align}
\{\,X_i\!\otimes\!X_{i+1},\;X_i\!\otimes\!Y_{i+1},\;Y_i\!\otimes\!X_{i+1},\;Y_i\!\otimes\!Y_{i+1}\,\}.
\end{align}
Thus, while the geometric placement of the two‑qubit gates (the pair \((i,i+1)\)) is identical for all blocks, the specific Pauli structure of each gate varies randomly from block to block. Once a block has been instantiated, the choice of generators remains fixed throughout the optimization.

\begin{figure*}[t]
    \centering
    \includegraphics[width=0.7\textwidth]{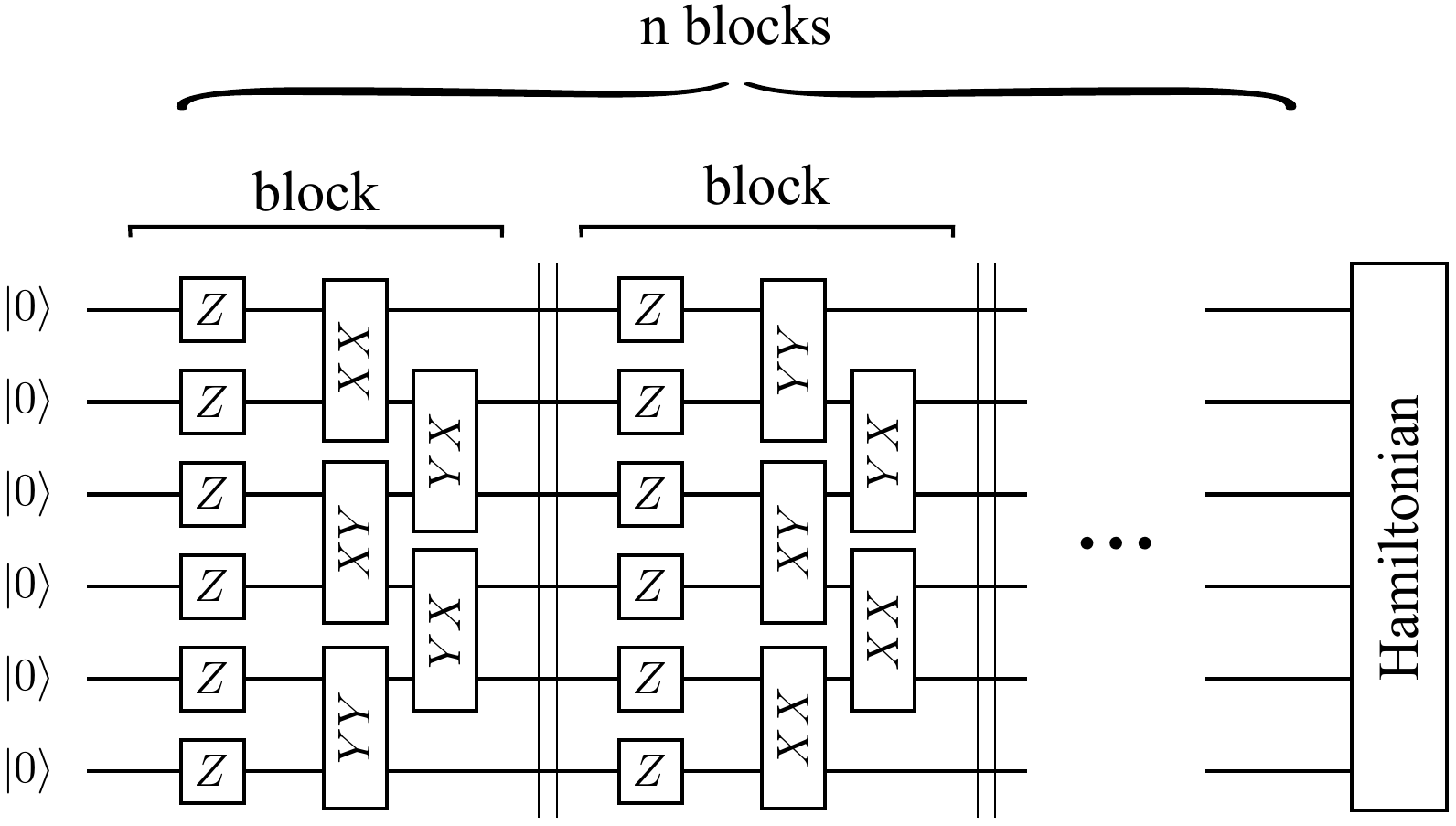}
    \caption{
    Schematic of the matchgate circuit employed in the optimization simulations. For an $n$‑qubit system the circuit consists of $n$ blocks; each block contains $n$ single‑qubit gates and $n-1$ two‑qubit gates acting on adjacent qubit pairs.
    }
    \label{fig:mat-circuit}
\end{figure*}

In summary, each block contains
\[
\underbrace{n}_{\text{single‑qubit}}+
\underbrace{n-1}_{\text{two‑qubit (layers 2 and 3)}}=2n-1
\]
variational parameters. Because the circuit comprises \(O(n)\) such blocks, the total number of independent variational parameters scales as \(O(n^{2})\).

The expressive power of the dynamical Lie algebra (DLA) is quantified by the dimension of the tangent space of admissible quantum states. Full expressivity is achieved when the number of variational parameters exceeds a critical value,
\[
N^{\text{(crit)}}\;\le\;\dim (\mathfrak{g}),
\]
below which training is challenging and above which optimization is facile~\cite{gsim,Larocca2023overpara}. For the matchgate‑generated DLA we have
\[
\dim (\mathfrak{g}) = O(n^{2}),
\]
which implies that the ansatz is over‑parameterized (\(N > N^{\text{(crit)}}\)). Consequently, the number of parameters is sufficient to achieve maximal expressivity within the matchgate DLA. In this regime, spurious local minima are absent, the loss landscape is markedly smoother, and variational optimization becomes substantially more tractable.

Furthermore, the circuit depth is sufficient to connect all computational‑basis states of the same parity. Since both the initial state and the target state of the MaxCut Hamiltonian reside in the computational basis, the chosen circuit architecture is well‑suited for the optimization task.

\subsection{Majorana operators and \(\kappa=4\) group module}
\label{appsec:majorana}

Matchgate circuits admit a fermionic representation in terms of \(2n\) Majorana operators constructed from Pauli strings.  A Pauli string on an \(n\)-qubit Hilbert space \(\mathcal{H}\) is defined as
\[
P=\bigotimes_{i=1}^{n} P_i,
\qquad
P_i\in\{I_i,X_i,Y_i,Z_i\}.
\]

For an \(n\)-qubit system, the Majorana operators are given by
\begin{align}
c^{X}_{1}&= X_{1}\,I_{2}\cdots I_{n},
&
c^{X}_{2}&= Z_{1}\,X_{2}\,I_{3}\cdots I_{n},
&
\dots~~
&
c^{X}_{n}= Z_{1}\cdots Z_{n-1}\,X_{n},
\nonumber\\[4pt]
c^{Y}_{1}&= Y_{1}\,I_{2}\cdots I_{n},
&
c^{Y}_{2}&= Z_{1}\,Y_{2}\,I_{3}\cdots I_{n},
&
\dots~~
&
c^{Y}_{n}= Z_{1}\cdots Z_{n-1}\,Y_{n}.
\label{eq:app-majorana}
\end{align}
They satisfy the Clifford algebra
\[
\{c^{\alpha}_{\mu},c^{\beta}_{\nu}\}
=
2\,\delta_{\mu\nu}\,\delta_{\alpha\beta},
\qquad
\alpha,\beta\in\{X,Y\},
\quad
\mu,\nu=1,\dots,n .
\]

Products of distinct Majorana operators induce a graded decomposition of the operator space.  The \(\kappa\)-th group module is defined as
\begin{align}
\mathcal{B}_{\kappa}
=
\operatorname{span}_{\mathbb{R}}
\left\{
c_{i_1}c_{i_2}\cdots c_{i_\kappa}
\,\middle|\,
1\le i_1<\dots<i_\kappa\le 2n
\right\},
\end{align}
with dimension \(d_{\kappa}=\binom{2n}{\kappa}\).

The module \(\mathcal{B}_{4}\) consists of all products of four distinct Majorana operators.  Each element reduces, up to a global phase, to a Pauli string.

Let \(\tilde b\) denote an unnormalised Pauli string.  Since Pauli strings are unitary and satisfy \(P_i^2=I_i\),
\[
\tilde b^{\dagger}\tilde b=\mathbb{I}.
\]
Therefore,
\[
\langle \tilde b,\tilde b\rangle_{\mathrm{HS}}
=
\operatorname{Tr}(\mathbb{I})
=
2^{n}.
\]
The normalised basis element is
\[
b=\frac{1}{\sqrt{2^{n}}}\,\tilde b,
\]
which satisfies \(\langle b,b\rangle_{\mathrm{HS}}=1\).

Multiplication of Pauli strings is carried out site by site using the single-qubit algebra, and any generated phase factors are absorbed into the global coefficient.  This ensures closure of the graded decomposition under operator multiplication.

\subsection{Connectivity of computational basis states under matchgate circuits}

A matchgate circuit can, in principle, connect any two computational basis states that share the same fermionic parity. Moreover, a circuit depth scaling as \(\Theta(n)\) is sufficient to realize this connectivity.

Nearest neighbor matchgate circuits preserve fermionic parity. Consequently, the Hilbert space decomposes into two invariant sectors corresponding to even and odd Hamming weight. This statement concerns the reachability of computational-basis states and does not imply that arbitrary superpositions within a fixed parity sector can be generated, as the accessible states are restricted to fermionic Gaussian states.

A gate included in matchgate circuit acts independently on the even-parity subspace and the odd-parity subspace, thereby conserving parity throughout the circuit. Such circuits are equivalent to fermionic linear optics and include generators corresponding to both hopping and pairing processes.

These operations suffice to connect all computational basis states within a fixed parity sector. Consider an arbitrary bit string \(x\) with even parity. By successive application of nearest neighbor hopping operations, excitations can be rearranged into a contiguous block, yielding a configuration of the form
\[
11\cdots 1100\cdots 0.
\]
Subsequently, pairing operations can be applied to each adjacent pair \(11 \to 00\), reducing the state to the vacuum \(\ket{00\cdots 0}\). Reversing this sequence enables the preparation of any other even-parity configuration \(y\). Hence, all even parity computational basis states are mutually connected.

An analogous construction applies to the odd-parity sector. After grouping excitations using hopping operations, repeated pair annihilation leaves a single excitation, producing a canonical representative such as \(\ket{10\cdots 0}\).
Reversing the procedure allows one to reach any other odd-parity configuration.
Therefore, within each parity sector, the set of computational basis states forms a single connected component under matchgate circuits.

In a one-dimensional nearest-neighbor architecture, the circuit depth required to connect arbitrary computational-basis states within a fixed parity sector scales linearly with system size. An upper bound of \(O(n)\) follows from a constructive procedure: nearest-neighbor hopping gates arranged in a brickwork pattern redistribute excitations across the chain in linear depth, while pairing operations can be applied in parallel to eliminate or create adjacent pairs.

\subsection{Effective Unitary Representation}

Let \(U(\boldsymbol{\theta})=\prod_q U(\theta_q)\) be a matchgate circuit, where each elementary unitary is generated by \(\gamma_q\in\mathfrak{g}\),
\begin{align}
U(\theta_q)=\exp(i\theta_q \gamma_q).
\end{align}
The effective representation \(\mathcal{M}(\theta_q)\) for each unitary gate can be achieved using Eq.\,\eqref{eq:eff-U}. The effective matrix for the full circuit is obtained as an ordered product,
\begin{align}
\mathcal{M}(\boldsymbol{\theta})
=
\prod_q \mathcal{M}(\theta_q).
\end{align}

Efficient computation of \(\mathcal{M}(\boldsymbol{\theta})\) follows from three algebraic properties of matchgate generators. First, each generator squares to identity,
\begin{align}
\gamma_q^2=I.
\label{eq:generator-cond-1}
\end{align}
Second, the commutator between a generator and a basis element is either zero or proportional to another basis element,
\begin{align}
\left[\gamma_q,b^{\,l}_{\kappa_p}\right]
=
\pm i\, b^{\,l'}_{\kappa_p}
\quad\text{or}\quad 0.
\label{eq:generator-cond-2}
\end{align}
Third, conjugation by a generator either preserves or flips the sign of a basis element,
\begin{align}
\gamma_q b^{\,l}_{\kappa_p} \gamma_q
=
\begin{cases}
\;\; b^{\,l}_{\kappa_p}, & [\gamma_q,b^{\,l}_{\kappa_p}]=0,\\
-\, b^{\,l}_{\kappa_p}, & [\gamma_q,b^{\,l}_{\kappa_p}]\neq 0.
\end{cases}
\label{eq:generator-cond-3}
\end{align}
These properties arise from the Pauli-string structure of both generators and basis elements.

Using Eq.~\eqref{eq:generator-cond-1}, the unitary can be written as
\begin{align}
U(\theta_q)
= \cos\theta_q\, I + i\sin\theta_q\, \gamma_q,
\nonumber
\end{align}
which yields
\begin{align}
U(\theta_q)\, b^{\,l}_{\kappa_p}\, U^{\dagger}(\theta_q)
&=
\cos^{2}\theta_q\, b^{\,l}_{\kappa_p}
+ i\sin\theta_q\cos\theta_q\,\bigl[\gamma_q, b^{\,l}_{\kappa_p}\bigr]
+ \sin^{2}\theta_q\, \gamma_q b^{\,l}_{\kappa_p} \gamma_q.
\nonumber 
\end{align}
Applying Eqs.~\eqref{eq:generator-cond-2}–\eqref{eq:generator-cond-3}, this expression simplifies to
\begin{align}
U(\theta_q)\, b^{\,l}_{\kappa_p}\, U^{\dagger}(\theta_q)
&= b^{\,l}_{\kappa_p},
&& \text{if }[\gamma_q,b^{\,l}_{\kappa_p}]=0,
\nonumber\\
&=
\cos(2\theta_q)\, b^{\,l}_{\kappa_p}
\mp
\sin(2\theta_q)\, b^{\,l'}_{\kappa_p},
&& \text{otherwise}.
\label{eq:eff-U-mat}
\end{align}

Therefore, each row of \(\mathcal{M}(\theta_q)\) contains at most two non-zero entries. The action of each gate is either trivial or a two-dimensional rotation within a pair of basis elements, resulting in a highly sparse effective representation.

\subsection{Group Module Distribution of Initial State}

Consider a zero-entanglement \(n\)-qubit pure product state.  Its density matrix factorises as
\[
\rho=\rho_1\otimes\rho_2\otimes\cdots\otimes\rho_n,
\]
where each \(\rho_i\) is a single-qubit pure state.  Every single-qubit density operator admits the Bloch representation
\begin{align}
\rho_i = \frac{1}{2} \left( I_i + a_i^X X_i + a_i^Y Y_i + a_i^Z Z_i \right),
\nonumber
\end{align}
with real coefficients \(a_i^{X,Y,Z}\in\mathbb{R}\) satisfying
\[
(a_i^X)^2+(a_i^Y)^2+(a_i^Z)^2=1.
\]
The global state is normalised as \(\operatorname{Tr}[\rho]=1\).

For computational-basis states \(|0\rangle\) and \(|1\rangle\), one has
\begin{align}
\rho_i^{|0\rangle}
=
\frac{1}{2}(I_i+Z_i),
\qquad
\rho_i^{|1\rangle}
=
\frac{1}{2}(I_i-Z_i). \nonumber
\end{align}
Hence an \(n\)-qubit computational-basis product state takes the form
\begin{align}
\rho
=
\frac{1}{2^n}
\prod_{i=1}^{n}
(I_i \pm Z_i),
\end{align}
which expands into a sum of Pauli strings containing only \(I_i\) and \(Z_i\).  Each term in this expansion belongs to a definite group module \(\mathcal{B}_{\kappa}\), where \(\kappa\) equals the number of non-identity factors in the corresponding Pauli string.

Although the relative signs depend on the specific bitstring, the absolute magnitude of every coefficient is fixed to \(2^{-n}\).  Consequently, all zero-entanglement computational-basis pure states share the same distribution of weights over the group modules \(\mathcal{B}_{\kappa}\).

\subsection{Computational Cost}
\label{appsec:complexity}
The expectation value in Eq.~\eqref{eq:eff-expectation} is obtained by successive applications of the matrices \(\mathcal{M}(\theta_q)\) to a coefficient vector, followed by an inner product with the Hamiltonian vector.

For the MaxCut Hamiltonian under matchgate dynamics, the relevant module has dimension
\[
d_{\kappa_p}= \dim (\mathcal{B}_{\kappa_p}) = O(n^4) .
\]
A naïve matrix–vector multiplication therefore requires \(O(n^8)\) operations per gate. Because \(O(n^2)\) gates are needed for sufficient expressivity of the matchgate DLA, the total cost of a single circuit evaluation scales as \(O(n^{10})\).

This cost can be dramatically reduced by exploiting the sparse structure of Eq.~\eqref{eq:eff-U-mat}. Rather than storing full matrices, one tracks how each generator permutes the basis elements.

For each generator \(\gamma_q\), the action on the basis \(\{b^{\,l}_{\kappa_p}\}_l\) is either trivial or maps a basis element to a single partner element. Since the number of distinct generators scales as \(O(n)\) and each acts on \(O(n^4)\) basis elements, the complete commutation structure can be stored using \(O(n^5)\) data.

This representation permits element‑wise application of each gate. The procedure is: (i) check whether the basis element commutes with the generator; if it does, the coefficient remains unchanged; (ii) if it does not commute, update the coefficient according to Eq.~\eqref{eq:eff-U-mat}. The number of basis elements that fail to commute with a given circuit generator scales as \(O(n^3)\), yielding a per‑gate computational complexity of \(O(n^3)\). Consequently, the total computational complexity for evaluating the entire circuit once scales as \(O(n^5)\).

\subsection{Experimental details}
\label{appsec:experiment}
The experiment is performed on a system equipped with NVIDIA RTX Pro 6000 Blackwell Max-Q GPU. Parameter optimization is performed using the Adam optimizer with a learning rate subject to a gradual decay schedule.
The optimization is terminated when the loss function fails to improve by more than \(10^{-5}\) over 50 consecutive iterations.

The optimization procedure for training the matchgate circuit ansatz follows a classical variational quantum eigensolver framework, adapted to leverage the special structure of matchgate dynamics. Parameters are optimized using gradient-based methods with automatic differentiation, where gradients of the expectation value $\langle H \rangle(\boldsymbol{\theta})$ are computed via back-propagation.

The gradient of the expectation value with respect to the variational parameters $\boldsymbol{\theta} = \{\theta_q\}$ is computed using automatic differentiation provided by PyTorch. For each parameter $\theta_q$, the gradient $\partial \langle H \rangle / \partial \theta_q$ is computed during the backward pass. The implementation supports gradient checkpointing to reduce memory consumption: instead of storing all intermediate activation values during the forward pass, certain gates are re-computed on-the-fly during back-propagation. This enables handling larger system sizes $n$ on GPU memory-limited environments.

Parameter optimization is performed using the Adam (Adaptive Moment Estimation) optimizer~\cite{kingma2017adam}, initialized with a learning rate of $\eta_{\text{initial}} = 0.05$. The optimizer maintains running estimates of the first and second moments of the gradients, providing adaptive learning rates for each parameter.

To facilitate convergence and escape local minima, a gradual learning rate decay schedule is employed. The algorithm monitors the loss function over consecutive iterations; if no improvement exceeding $\delta_{\text{threshold}} = 10^{-5}$ is observed for $P = 50$ iterations (after a minimum of $100$ steps), the learning rate is multiplied by a decay factor $f_{\text{decay}} = 0.5$. This process repeats until the learning rate reaches the minimum value $\eta_{\min} = 0.001$, at which point the optimization terminates. The decay sequence proceeds as:
\[
\eta: 0.05 \to 0.025 \to 0.0125 \to 0.00625 \to 0.003125 \to 0.015625 \to 0.001\text{ (min)}.
\]
This yields a maximum of five learning rate decay events.

The execution pipeline for a single optimization iteration proceeds as follows:

\begin{enumerate}
\item \textbf{State initialization}: Construct the initial state $\ket{\psi^0}$ (either all-zero $\ket{0}^{\otimes n}$ or single-qubit flipped $X_1\ket{0}^{\otimes n}$) and compute its representation in the Majorana basis.
\item \textbf{Forward pass}: Apply the matchgate circuit $U(\boldsymbol{\theta}) = \prod_q U(\theta_q)$ to the initial state by sequentially applying each gate's effective matrix $\mathcal{M}(\theta_q)$ to the coefficient vector.
\item \textbf{Expectation value}: Compute the inner product $\langle H \rangle = \bra{\psi(\boldsymbol{\theta})} H \ket{\psi(\boldsymbol{\theta})}$ using the evolved state coefficients and the Hamiltonian representation.
\item \textbf{Backward pass}: Perform automatic differentiation to compute gradients $\nabla_{\boldsymbol{\theta}} \langle H \rangle$, with optional gradient checkpointing.
\item \textbf{Parameter update}: Apply Adam optimizer to update $\boldsymbol{\theta} \leftarrow \boldsymbol{\theta} - \eta \cdot \text{Adam}(\nabla_{\boldsymbol{\theta}} \langle H \rangle)$.
\item \textbf{Check convergence}: Evaluate the improvement criterion; trigger learning rate decay or termination if conditions are met.
\end{enumerate}

\subsection{Convergence curves}
\label{appsec:convergence}

\begin{figure*}[htbp]
    \centering
    \includegraphics[width=0.95\textwidth]{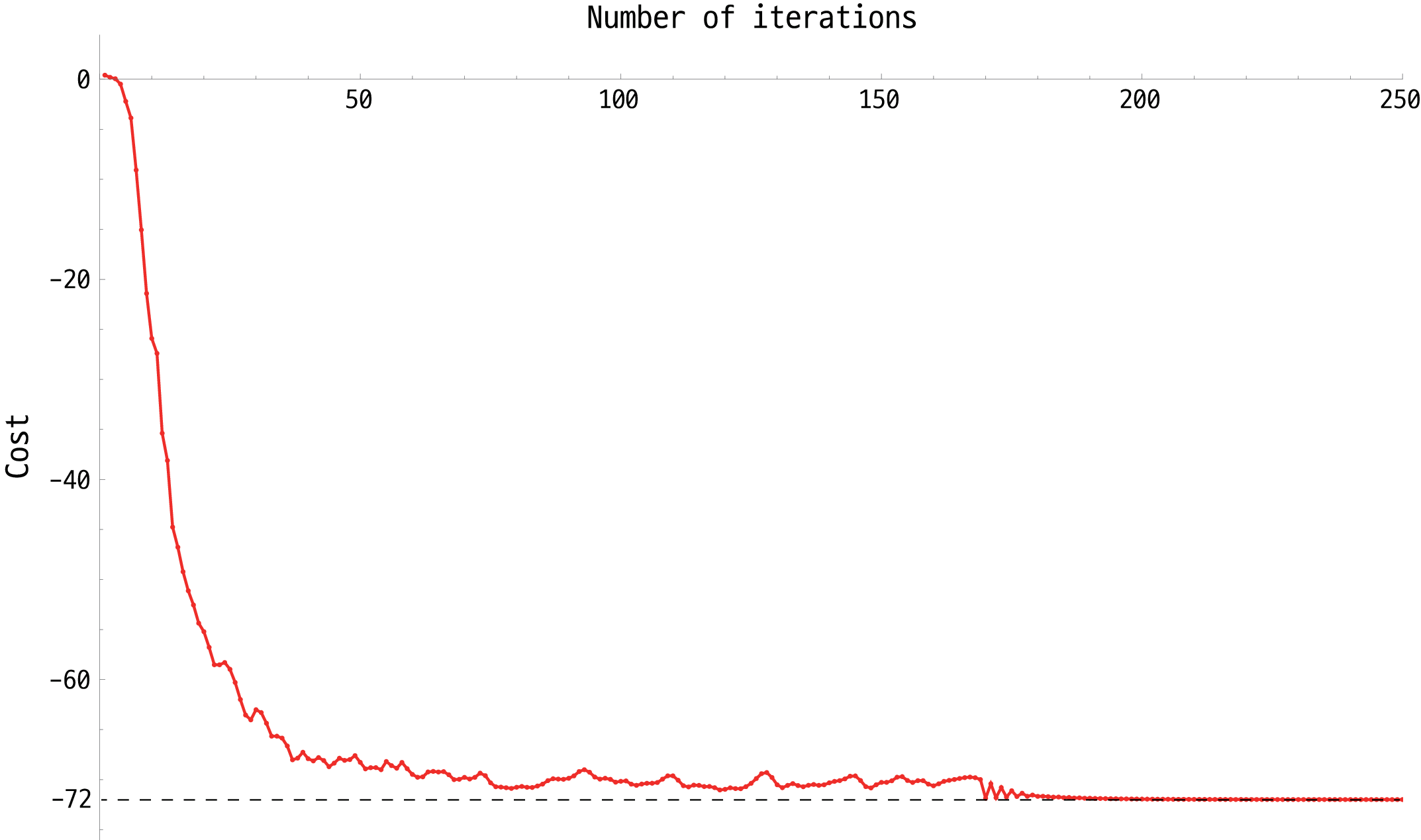}
    \caption{
    Convergence of the cost for an $n=60$ instance that the projected method reached the exact ground state.
    The lowest eigenvalue of the instance is $-72$.
    }
    \label{fig:conv_n=60}
\end{figure*}

\begin{figure*}[htbp]
    \centering
    \includegraphics[width=0.95\textwidth]{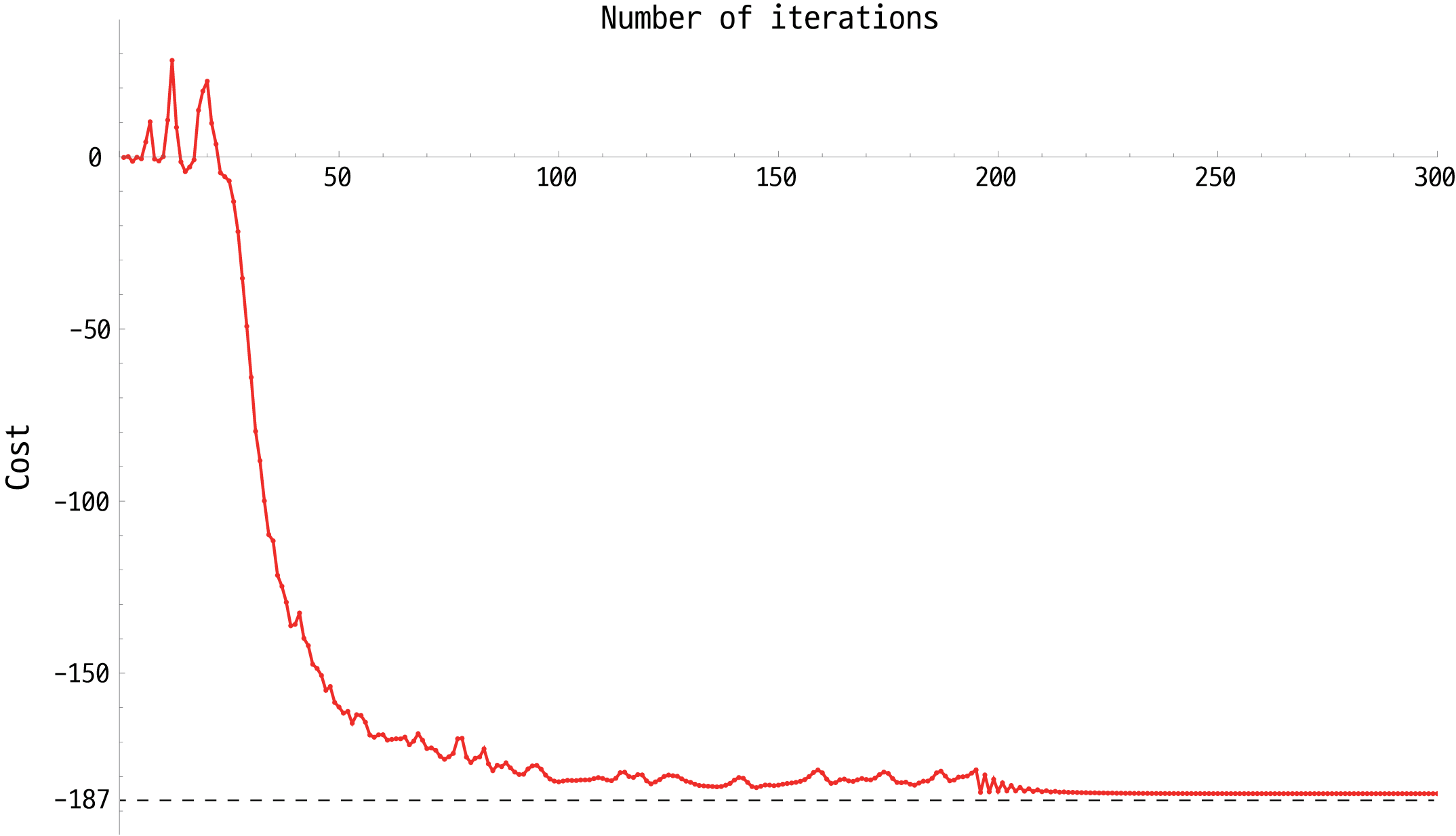}
    \caption{
    Convergence of the cost for $n=60$, $p=0.5$ Erd\H{o}s-R\'{e}nyi random graph instance \texttt{g05\_60.0} from Biq Mac Library~\cite{biqmac}.
    Out of nine trials, the best one we obtained was near $-185$ corresponding to the cut size of $535$, whereas the optimal cut is $536$ corresponding to $-187$ in cost.
    }
    \label{fig:conv_g05_60_0}
\end{figure*}

Figure~\ref{fig:conv_n=60} illustrates a convergence curve of an $n=60$ random 3-regular graph instance in which the exact ground state is obtained by the projected method.
For most other instances however, the algorithm failed as shown in Table~1 in the main text.

Figure~\ref{fig:conv_g05_60_0} shows a convergence of the cost for $n=60$, $p=0.5$ Erd\H{o}s-R\'{e}nyi random graph instance \texttt{g05\_60.0} from Biq Mac Library~\cite{biqmac}, where $p$ is the probability that an edge exists between two vertices.
Out of nine trials, the lowest cost it achieved was $-185$ corresponding to the cut size of $535$.
The maximum cut is $536$~\cite{biqmac}.

\section{Overparameterization in quantum circuits from a Fisher information perspective}

Fisher information quantifies the sensitivity of a probability distribution with respect to variations in a parameter. In the quantum setting, the Quantum Fisher Information (QFI) extends this notion by optimizing over all possible measurements, yielding a quantity that is intrinsic to the quantum state. The corresponding Quantum Fisher Information Matrix (QFIM) defines a Riemannian metric on the manifold of quantum states and characterizes the local geometry induced by parameter variations.

For a classical probability distribution \(p(x|\theta)\), the Fisher information is defined as
\[
F(\theta)
=
\int dx \, p(x|\theta)\big(\partial_\theta \log p(x|\theta)\big)^2.
\]
This quantity measures the distinguishability of nearby distributions under infinitesimal changes in \(\theta\).

In the quantum setting, for a parameterized state \(\rho(\theta)\), the QFI is defined as
\[
F_Q(\theta)
=
\max_{\mathrm{POVM}} F_{\mathrm{classical}}(\theta),
\]
which removes measurement dependence and promotes the quantity to an intrinsic property of the state.

For a multi-parameter family \(\boldsymbol{\theta} = (\theta_1,\dots,\theta_M)\), the QFIM for pure states is given by
\[
F_{ij}
=
4\,\mathrm{Re}\!\Big(
\langle \partial_i \psi | \partial_j \psi \rangle
-
\langle \partial_i \psi | \psi \rangle
\langle \psi | \partial_j \psi \rangle
\Big).
\]
This matrix induces a local metric via
\[
ds^2 = \frac{1}{2}\,\delta^T F \delta,
\]
thereby quantifying how infinitesimal parameter variations translate into distinguishable changes in the quantum state.

The rank of the QFIM provides a direct characterization of local expressivity. Since the QFIM is a Gram matrix of state derivatives, its rank equals the number of linearly independent directions in which the quantum state can change under infinitesimal parameter variations. Consequently, even in the presence of a large number of parameters, the effective degrees of freedom may be limited if parameter-induced variations are redundant.

The set of generators defining a parameterized quantum circuit induces a dynamical Lie algebra \(\mathfrak{g}\), which constrains the set of reachable unitaries and the corresponding state manifold. The rank of the QFIM is bounded by the dimension of this algebra,
\begin{align}
\operatorname{rank}(F) \le \dim(\mathfrak{g}_S),
\end{align}
where \(\mathfrak{g}_S\) denotes the effective algebra acting on the state. This bound reflects a fundamental limitation: no parameterization can generate more independent directions than those allowed by the underlying algebraic structure.

Overparameterization is characterized by the saturation of this bound. As the number of parameters \(M\) increases, the rank of the QFIM grows until it reaches its maximal value determined by \(\dim(\mathfrak{g}_S)\). Denoting by \(M_c\) the minimal number of parameters required for saturation, the regimes \(M < M_c\) and \(M \ge M_c\) correspond to underparameterized and overparameterized circuits, respectively. In typical settings, one observes that \(M_c\) scales with \(\dim(\mathfrak{g}_S)\).

This transition has direct implications for optimization. In the underparameterized regime, the mapping from parameter space to state space restricts the accessible directions, so that descent directions present in the full state space may not be representable within the chosen parameterization. This restriction leads to the emergence of spurious local minima. Once the QFIM rank saturates, all directions permitted by the dynamical Lie algebra become accessible. As a result, directions that were previously inaccessible are incorporated, and configurations that appeared as local minima in the restricted parameterization are revealed as saddle points in the full space.

Although the QFI is defined through an optimization over measurements, the QFIM is employed here as a geometric object independent of any specific observable. In practical optimization problems involving the expectation value of a fixed observable \(O\), the QFIM does not guarantee favorable behavior for every choice of \(O\). However, it determines whether the parameterization spans all physically accessible directions in state space. If a descent direction exists but lies outside the accessible subspace, optimization necessarily fails. Conversely, once the accessible subspace coincides with the full Lie-algebraic orbit, such structural obstructions are removed.

The essential conclusion is that the trainability of parameterized quantum circuits is governed by the number of independent directions they induce in state space, rather than by the raw number of parameters. This quantity is captured by the rank of the QFIM and is fundamentally limited by the dimension of the dynamical Lie algebra. When this bound is saturated, the circuit achieves maximal local expressivity, and the associated removal of spurious local minima leads to significantly improved optimization performance.

\section{Classical simulation of quantum circuits via group module restricted representations}
\label{sec:app-classical-sim}

This note summarizes key elements of the framework introduced in \emph{Lie-algebraic classical simulation for quantum computing} (Matthew L. Goh \emph{et al.}, PRR 2025)~\cite{gsim}, focusing on the conditions under which expectation values can be computed efficiently.

When the observable operator is restricted to a structured subspace of the operator algebra, its expectation value with respect to an arbitrary quantum state can be evaluated in polynomial time.

A \emph{special operator} is defined as an observable fully supported on a subspace \(\mathcal{L}_{\lambda}\), where
\[
\mathcal{L}_{\lambda} = \bigoplus_{\kappa'} \mathcal{B}_{\kappa'} \, ,
\]
and each group module \(\mathcal{B}_{\kappa'}\) satisfies
\[
\dim (\mathcal{B}_{\kappa'}) \le \operatorname{poly}(n) \, .
\]
The restriction to such subspaces ensures that the relevant degrees of freedom scale polynomially with the system size.

The action of a unitary operator on a special operator is confined to \(\mathcal{L}_{\lambda}\). Consequently, the operator dynamics can be represented as a linear transformation acting on a vector space of dimension \(\dim (\mathcal{L}_{\lambda})\), enabling an efficient classical description.

For a general quantum state, only the components supported on \(\mathcal{L}_{\lambda}\) contribute to the expectation value of a special operator. As a result, the expectation value reduces to an inner product between the vector representation of the state restricted to \(\mathcal{L}_{\lambda}\) and the vector representation of the transformed observable.

The dominant computational cost arises from applying the matrix representation of the unitary transformation within this reduced space, which scales as
\[
O\!\big(\dim(\mathcal{L}_{\lambda})^{2}\big) \, ,
\]
and is therefore polynomial in \(n\).

We consider a parametrized quantum circuit of the form
\[
U(\boldsymbol{\theta}) = \prod_{q=1}^{Q} e^{i\theta_q \gamma_q} \, ,
\]
where \(\boldsymbol{\theta} = [\theta_1,\dots,\theta_Q]\) and the generators satisfy \(\gamma_q \in \mathfrak{g}\).

The associated dynamical Lie algebra is defined as
\[
\mathfrak{g} = \operatorname{span}_{\mathbb{R}} \!\big\langle \{ i\gamma_q \}_{q=1}^{Q} \big\rangle \subseteq \mathfrak{su}(2^n) \, ,
\]
i.e., the real vector space generated by all nested commutators of \(\{ i\gamma_q \}\).

The corresponding dynamical Lie group is obtained via exponentiation,
\[
\mathcal{U} = e^{\mathfrak{g}} \equiv \{ e^{iA} \mid iA \in \mathfrak{g} \} \, ,
\]
and consists of all unitaries generated by elements of the algebra.

Let \(\mathcal{L}_{\lambda} \subset \mathcal{B}\) denote a subspace of the operator algebra decomposed into group modules as above. For any \(iA \in \mathfrak{g}\) and basis element \(b_{\kappa'}^{\,l} \in \mathcal{B}_{\kappa'}\), the adjoint action is given by
\[
\left[ A, b_{\kappa'}^{\,l} \right]
= \sum_{l'=1}^{\dim(\mathcal{B}_{\kappa'})} i\, \bar{A}_{l,l'} \, b_{\kappa'}^{\,l'} \, ,
\]
with matrix elements
\[
\bar{A}_{l,l'} = -i\, \operatorname{Tr}\!\big[\, b_{\kappa'}^{\,l'} [A, b_{\kappa'}^{\,l}] \,\big] \, .
\]
The adjoint representation preserves the module decomposition, yielding a block-diagonal structure
\[
\bar{A} = \bigoplus_{\kappa'} \bar{A}_{\kappa'} \, .
\]

For a unitary \(U = e^{iA}\), the transformation of the basis elements becomes
\[
U\, b_{\kappa'}^{\,l} \, U^{\dagger}
= \sum_{l'} \big[ e^{i\bar{A}_{\kappa'}} \big]_{l,l'} \, b_{\kappa'}^{\,l'} \, .
\]
Accordingly,
\[
e^{i\bar{A}} = \bigoplus_{\kappa'} e^{i\bar{A}_{\kappa'}} \, .
\]

A special operator takes the form
\[
O = \sum_{b_{\kappa'}^{\,l} \in \mathcal{L}_{\lambda}}
\omega^{(\lambda)}_{\kappa',l} \, b_{\kappa'}^{\,l} \, ,
\]
where the coefficient vector \(\boldsymbol{\omega}^{(\lambda)}\) decomposes into module-wise components of polynomial size.

For a quantum state evolving as
\[
\rho^{\mathrm{out}} = U \rho^{\textrm{i}} U^{\dagger} \, ,
\]
only the projections of \(\rho^{\textrm{i}}\) onto the modules supporting \(O\) contribute. Define
\[
\varphi^{\textrm{i}}_{\kappa',l}
= \operatorname{Tr}\!\big[\, b_{\kappa'}^{\,l} \rho^{\textrm{i}} \,\big] \, .
\]
The transformed coefficients are
\[
\varphi^{\mathrm{out}}_{\kappa',l}
= \sum_{l'} \varphi^{\textrm{i}}_{\kappa',l'} \,
\big[ e^{i\bar{A}_{\kappa'}} \big]_{l',l} \, .
\]

The expectation value is then expressed as
\begin{align}
\langle O \rangle
= \operatorname{Tr}\!\big[ \rho^{\mathrm{out}} O \big]
= \boldsymbol{\varphi}^{\mathrm{out}} \!\cdot\! \boldsymbol{\omega}^{(\lambda)}
= \boldsymbol{\varphi}^{\textrm{i}} \!\cdot\!
\left( \bigoplus_{\kappa'} e^{i\bar{A}_{\kappa'}} \right)
\!\cdot\! \boldsymbol{\omega}^{(\lambda)} \, .
\end{align}

In a circuit consisting of \(L\) parametrized gates, each gate admits a block-diagonal representation over the modules. The cost of applying a single gate within a module scales as \(O\!\big(\dim(\mathcal{B}_{\kappa'})^{2}\big)\), leading to an overall complexity
\[
O\!\big( L \, \max_{\kappa'} \{ \dim(\mathcal{B}_{\kappa'}) \}^{2} \big) \, .
\]
Since \(\dim(\mathcal{B}_{\kappa'})\) grows polynomially with \(n\), the full simulation cost remains polynomial. Consequently, expectation values of special operators can be efficiently computed on a classical computer within this framework.

\section{Classical simulation of quantum circuits via Wick-based techniques}

This note summarizes key elements of the framework introduced in \emph{Lie-algebraic classical simulation for quantum computing} (Matthew L. Goh \emph{et al.}, PRR 2025), focusing on the role of highest-weight states and Wick-based methods.

For a given quantum circuit, there exists a class of states—referred to as \emph{special states}—for which classical simulation of expectation values can be performed efficiently.

The Lie algebra \(\mathfrak{g}\) describing the circuit admits a Cartan–Weyl decomposition into a Cartan subalgebra and sets of raising and lowering operators. The Cartan subalgebra \(\mathfrak{h}\) is the maximal abelian subalgebra consisting of mutually commuting elements, and therefore admits a simultaneous eigenbasis.

A highest-weight state is defined as a simultaneous eigenstate of all Cartan generators that is annihilated by all raising operators. Such highest-weight states constitute the special states relevant for efficient simulation.

When an observable belongs to the Lie algebra, it can be decomposed into components supported on the Cartan subalgebra and on the raising and lowering subspaces. Upon evaluating expectation values with respect to a highest-weight state, only the Cartan components contribute, while the remaining terms vanish due to the annihilation properties of the state.

Even when the observable does not lie within the Lie algebra, it can be expressed as a product of Lie-algebra elements. By decomposing each factor into Cartan, raising, and lowering components and systematically reordering operators using commutation relations, lowering operators can be moved to the left and raising operators to the right. The resulting expression contains only a small number of non-vanishing contributions, as all other terms annihilate the highest-weight state or its dual. Provided that the operator length remains moderate, this procedure can be carried out efficiently.

A representative example is given by matchgate quantum circuits. In this case, the highest-weight states correspond to computational-basis states.

We consider a family of quantum circuits generated by a Lie algebra \(\mathfrak{g}\), with unitary evolution
\[
U = e^{iH}, \qquad H \in \mathfrak{g}.
\]
The Lie algebra admits a Cartan–Weyl decomposition
\begin{align}
\mathfrak{g} = \mathfrak{h} \oplus \bigoplus_{\alpha} \mathfrak{g}_{\alpha} \oplus \mathfrak{g}_{-\alpha},
\end{align}
where \(\mathfrak{h}\) is the Cartan subalgebra and \(\mathfrak{g}_{\pm \alpha}\) are generated by raising and lowering operators \(E_{\pm \alpha}\), respectively.

A highest-weight state \(|\mathrm{hw}\rangle\) satisfies
\[
E_{\alpha} |\mathrm{hw}\rangle = 0
\]
for all positive roots \(\alpha\), while remaining a simultaneous eigenstate of all elements of \(\mathfrak{h}\).

For an observable \(O \in \mathfrak{g}\), the expectation value after evolution is
\[
\langle O \rangle
= \langle \mathrm{hw} | U^{\dagger} O U | \mathrm{hw} \rangle.
\]
Since the adjoint action preserves the algebra,
\[
U^{\dagger} O U \in \mathfrak{g},
\]
and can be expanded as
\[
U^{\dagger} O U
= \sum_{k} c_{k}\,\tilde{H}_{k}
+ \sum_{\alpha} c_{\alpha}\,E_{\alpha}
+ \sum_{\alpha} c_{-\alpha}\,E_{-\alpha}.
\]
Using the highest-weight property, only the Cartan terms contribute, yielding
\[
\langle \mathrm{hw} | U^{\dagger} O U | \mathrm{hw} \rangle
= \sum_{k} c_{k}\,\tilde{h}_{k},
\]
where \(\tilde{h}_{k}\) denote the eigenvalues of the Cartan generators on \(|\mathrm{hw}\rangle\).

For observables expressed as products of Lie-algebra elements,
\[
O = O_{1} O_{2}, \qquad O_{1}, O_{2} \in \mathfrak{g},
\]
one has
\[
U^{\dagger} O U
= (U^{\dagger} O_{1} U)(U^{\dagger} O_{2} U).
\]
Each factor is expanded in the Cartan–Weyl basis, and the operators are reordered using commutation relations so that lowering operators are positioned to the left and raising operators to the right. The annihilation properties of the highest-weight state ensure that only a restricted set of terms survive, leading to an efficient evaluation.

Matchgate circuits provide a concrete realization of this structure. These circuits correspond to free-fermionic evolutions and admit a fermionic representation in terms of creation and annihilation operators. The fermionic vacuum
\[
a_{j} |0\rangle = 0
\]
serves as a highest-weight state, which corresponds in the qubit representation to the computational-basis state
\[
|0\,0\,\cdots\,0\rangle.
\]

In practice, classical simulation of matchgate circuits is performed using the fermionic Gaussian formalism rather than an explicit Cartan–Weyl decomposition. Introducing Majorana operators \(\{c_{a}\}_{a=1}^{2n}\), one defines the covariance matrix
\begin{align}
\Gamma_{ab} = \langle c_{a} c_{b} \rangle.
\end{align}
For fermionic Gaussian states, Wick's theorem implies
\[
\langle c_{a_{1}} \cdots c_{a_{2m}} \rangle
= \operatorname{Pf}\!\big( \Gamma_{a_{i} a_{j}} \big),
\]
so that higher-order correlators reduce to Pfaffians of two-point functions.

Under matchgate evolution, Majorana operators transform linearly,
\[
U^{\dagger} c_{a} U
= \sum_{b} R_{ab} c_{b}, \qquad R \in SO(2n),
\]
which induces the transformation
\begin{align}
\Gamma \;\mapsto\; R \Gamma R^{T}.
\end{align}
Because the Majorana operator space is closed under this transformation, the Gaussian structure of the state is preserved.

Consequently, matchgate circuits implement orthogonal transformations in Majorana space associated with the group \(SO(2n)\). Gaussian states remain Gaussian under the dynamics, and all correlation functions can be computed efficiently from the covariance matrix using Pfaffians. This structure underlies the efficient classical simulation of matchgate circuits acting on computational-basis states.

\end{document}